\newcommand{\ket}[1]{|{#1}\rangle}
\newcommand{\bra}[1]{\langle{#1}|}
\DeclareMathOperator{\Tr}{tr}
\DeclareMathOperator{\trace}{Tr}
\DeclareMathOperator{\wt}{wt}
\newtheorem{proposition}{Proposition}
\newtheorem{theorem}[proposition]{Theorem}
\newtheorem{lemma}[proposition]{Lemma}
\theoremstyle{definition}
\newtheorem{example}[proposition]{Example}
\begin{document}
\nocite{*}
%
\title{Infinite Families of Quantum-Classical\\ Hybrid Codes}
%
%
%

\author{Andrew~Nemec
        ~and~Andreas~Klappenecker
\thanks{This paper was presented in part at the 2018 International Symposium on Information Theory \cite{Nemec2018}.}%
\thanks{This research was supported in part by a Texas A\&M University T3 grant.}%
\thanks{A. Nemec and A. Klappenecker are with the Department
of Computer Science \& Engineering, Texas A\&M University, College Station,
TX, 77843 USA e-mail: (nemeca@tamu.edu; klappi@cse.tamu.edu).}}

\maketitle

\begin{abstract} 
Hybrid codes simultaneously encode both quantum and classical information into physical qubits. We give several general results about hybrid codes, most notably that the quantum codes comprising a genuine hybrid code must be impure and that hybrid codes can always detect more errors than comparable quantum codes. We also introduce the weight enumerators for general hybrid codes, which we then use to derive linear programming bounds. Finally, inspired by the construction of some families of nonadditive codes, we construct several infinite families of genuine hybrid codes with minimum distance two and three.  
\end{abstract}

\begin{IEEEkeywords}
Quantum error-correcting codes, hybrid codes, nonadditive codes, codeword stabilized codes, linear programming bound.
\end{IEEEkeywords}

%
\IEEEpeerreviewmaketitle

\section{Introduction}

Hybrid codes simultaneously encode classical and quantum information into quantum digits such that the information is protected against errors when transmitted through a quantum channel. The simultaneous transmission of classical and quantum information was first investigated by Devetak and Shor \cite{Devetak2005}, who characterized the set of admissible rate pairs. Notably, they showed that, at least for certain small error rates, time-sharing a quantum channel is inferior to simultaneous transmission. Constructions of hybrid codes were first studied by Kremsky, Hsieh, and Brun \cite{Kremsky2008} in the context of entanglement-assisted stabilizer codes and by B{\'e}ny, Kempf, and Kribs \cite{Beny2007a, Beny2007b} who outlined an operator-theoretic construction.

More recently, Grassl, Lu, and Zeng \cite{Grassl2017} gave linear programming bounds for a class of hybrid codes and constructed a number of hybrid stabilizer codes with parameters better than those of hybrid codes constructed from quantum stabilizer codes. In particular, these genuine hybrid codes outperform ``trivial'' hybrid codes regardless of the error rate of the channel. Additional work on hybrid codes has been done from both a coding theory approach \cite{Nemec2018} and from an operator-theoretic approach \cite{Majidy2018}, as well as over a fully correlated quantum channel where the space of errors is spanned by $I^{\otimes n}$, $X^{\otimes n}$, $Y^{\otimes n}$, and $Z^{\otimes n}$ \cite{Li2019}. While they are still relatively unstudied, multiple uses for hybrid codes have already become apparent, including protecting hybrid quantum memory \cite{Kuperberg2003} and constructing hybrid secret sharing schemes \cite{Zhang2011}.

In this paper we give some general results regarding hybrid codes, most notably that at least one of the quantum codes comprising a genuine hybrid code must be impure, as well as show that a hybrid code can always detect more errors than a comparable quantum code. We also generalize the weight enumerators given by Grassl et al. \cite{Grassl2017} for hybrid stabilizer codes to more general nonadditive hybrid codes and use them to derive linear programming bounds. Finally, we give multiple constructions for infinite families of hybrid codes with good parameters. The first of these families are single error-detecting hybrid stabilizer codes with parameters $\left[\!\left[n,n-3\!:\!1,2\right]\!\right]_{2}$ where the length $n$ is odd, where an $\left[\!\left[n,k\!:\!m,d\right]\!\right]_{2}$ hybrid code encodes $k$ logical qubits and $m$ logical bits into $n$ physical qubits with minimum distance $d$. The second is a collection of families of single error-correcting hybrid codes constructed using stabilizer pasting, where we paste together stabilizers from Gottesman's $\left[\!\!\!\:\left[2^{j},2^{j}-j-2,3\right]\!\!\!\:\right]_{\!\!\:2}$ stabilizers codes \cite{Gottesman1996b} and the small distance 3 hybrid codes with $n=7,9,10,11$ from \cite{Grassl2017}. Each of these families of hybrid codes were inspired by families of nonadditive quantum codes, especially those constructed by Rains \cite{Rains1999a} and Yu, Chen, and Oh \cite{Yu2015}.

\section{Hybrid Codes}
A quantum code is a subspace of a Hilbert space that allows for encoded quantum information to be recovered in the presence of errors on the physical qudits. Here our encoded message is a unit vector in the Hilbert space $$H = \bigotimes_{\ell=1}^{n} \mathbb{C}^{q} \cong \mathbb{C}^{q^{n}}.$$ We say a qantum code has parameters $\left(\!\left(n,K\right)\!\right)_{q}$ if and only if it can encode a superposition of $K$ orthogonal quantum states into $n$ quantum digits with $q$ levels.

Now suppose that we want to simultaneously transmit classical and quantum messages. Our goal will be to encode them into the state of $n$ quantum digits that have $q$-levels each, so that the encoded message can be transmitted over a quantum channel. A hybrid code has the parameters $\left(\!\left(n,K\!:\!M\right)\!\right)_{q}$ if and only if it can simultaneously encode one of $M$ different classical messages and a superposition of $K$ orthogonal quantum states into $n$ quantum digits with $q$ levels.

We can understand the hybrid code as a collection of $M$ orthogonal $K$-dimensional quantum codes $\mathcal{C}_{m}$ that are indexed by the classical messages $m\in\left[M\right] \coloneqq \left\{1, 2, \ldots, M\right\}$. If we want to transmit a classical message $m\in\left[M\right]$ and a quantum state $\ket{\varphi}$, then we need to encode $\ket{\varphi}$ into the quantum code $\mathcal{C}_{m}$. We will refer to the each of the quantum codes $\mathcal{C}_{m}$ as \emph{inner codes} and the collection $\mathcal{C}=\left\{\mathcal{C}_{m} \mid m\in\left[M\right]\right\}$ as the \emph{outer code}.

\subsection{Error Detection}\label{seced}

The encoded states will be subject to errors when transmitted through a quantum channel.  Our first task will be to characterize the errors that can be detected by the hybrid code. We will set up a projective measurement that either upon receipt of a state $\ket{\psi}$ in $H$ either (a) returns $\epsilon$ to indicate that an error happened or (b) claims that there is no error and returns a classical message $m$ and a projection of $\ket{\psi}$ onto $\mathcal{C}_{m}$.

Let $P_{m}$ denote the orthogonal projector onto the quantum code $\mathcal{C}_{m}$ for all integers $m$ in the range $1\leq m\leq M$.  For distinct integers $a$ and $b$ in the range $1\leq a,b\leq M$, the quantum codes $\mathcal{C}_{a}$ and $\mathcal{C}_{b}$ are orthogonal, so $P_bP_a=0$. It follows that the orthogonal projector onto $\mathcal{C} =\bigoplus_{m=1}^{M}\mathcal{C}_{m}$ is given by $$P=P_{1}+P_{2}+\cdots+P_{M}.$$ We define the orthogonal projection onto $\mathcal{C}^\perp$ by $P_\epsilon = 1-P$. For the hybrid code $\left\{\mathcal{C}_{m} \mid m\in\left[M\right]\right\}$, we can define a projective measurement $\mathcal{P}$ that corresponds to the set $$\left\{ P_{1}, P_{2}, \dots, P_{M}, P_{\epsilon}\right\}$$ of projection operators that partition unity.

We can now define the concept of a detectable error.  An error $E$ is called \textit{detectable}\/ by the hybrid code $\left\{\mathcal{C}_{m} \mid m\in \left[M\right]\right\}$ if and only if for each index $a, b$ in the range $1\leq a, b \leq M$, we have 
\begin{equation}
\label{hkl}
P_{b} E P_{a} = \begin{cases}
\lambda_{E,a} P_a & \text{if $a=b$}, \\
0 & \text{if $a\neq b$}
\end{cases}
\end{equation}
for some scalar $\lambda_{E,a}$. 

The motivation for calling an error $E$ detectable is the following simple protocol. Suppose that we encode a classical message $m$ and a quantum state into a state $\ket{v_{m}}$ of $\mathcal{C}_{m}$, and transmit it through a quantum channel that imparts the error $E$. If the error is detectable, then measurement of the state $E\ket{v_{m}} = EP_{m}\ket{v_{m}}$ with the projective measurement $\mathcal{P}$ either 
\begin{compactenum}[(E1)]
\item returns $\epsilon$, which signals that an error happened, or 
\item returns $m$ and corrects the error by projecting the state back
  onto a scalar multiple $\lambda_{E,m}\ket{v_{m}} = P_{m}EP_{m}\ket{v_{m}}$ of the state
  $\ket{v_m}$.
\end{compactenum}
The definition of a detectable error ensures that the measurement $\mathcal{P}$ will never return an incorrect classical message $d$, since $P_{d} E P_{m} \ket{v_{m}}= 0$ for all $d\neq m$, so the probability of detecting an incorrect message is zero. An error that is not detectable by the hybrid code can change the encoded classical, the encoded quantum information, or both.

The condition in Equation (\ref{hkl}) is equivalent to the hybrid Knill-Laflamme condition \cite[Theorem 4]{Grassl2017} for detectable errors: an error $E$ is detectable by a hybrid code $\mathcal{C}$ with orthonormal basis states $\left\{\ket{c_{i}^{\left(a\right)}}\mid i\in\left[K\right],a\in\left[M\right]\right\}$ if and only if \begin{equation}\label{klvec}\bra{c_{i}^{\left(b\right)}}E\ket{c_{j}^{\left(a\right)}}=\lambda_{E,a}\delta_{ij}\delta_{ab}.\end{equation} Compared to the original Knill-Laflamme conditions for fully quantum codes \cite{Knill1997} where the scalar only depended on the detectable error, these hybrid conditions allow for scalars $\lambda_{E,a}$ that may depend on both the detectable error $E$ and the classical message $a$, allowing more flexibility in the design of codes. However, this flexibility comes at the price of no longer being able to send a superposition of all of the codewords.

The next proposition shows that hybrid codes can always detect more errors than a comparable quantum code that encodes both classical and quantum information. This is remarkable given that the advantages are much less apparent when one considers minimum distance, see \cite{Grassl2017}. 

\begin{proposition}[{\cite{Nemec2018}}]
\label{detectset}
The subset $\mathcal{D}$ of detectable errors in $B\!\left(H\right)$ of an $\left(\!\left(n,K\!:\!M\right)\!\right)_{q}$  hybrid code form a vector space of dimension $$\dim \mathcal{D} = q^{2n} - (MK)^2 + M.$$ In particular, a $\left(\!\left(n, K\!:\!M\right)\!\right)_{q}$ hybrid code with $M>1$ can detect more errors than an $\left(\!\left(n, KM\right)\!\right)_{q}$ quantum code.
\end{proposition}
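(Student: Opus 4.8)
The plan is to realize $\mathcal{D}$ as a direct sum of explicitly described coordinate subspaces of $B(H)$ and then simply count dimensions. First observe that $\mathcal{D}$ is a linear subspace: the conditions in (\ref{hkl}) are linear in $E$, since if $E,E'$ satisfy them with scalars $\lambda_{E,a},\lambda_{E',a}$ then $\alpha E+\beta E'$ satisfies them with scalars $\alpha\lambda_{E,a}+\beta\lambda_{E',a}$. Next, set $\Pi_{0}=P_{\epsilon}$ and $\Pi_{m}=P_{m}$ for $m\in\left[M\right]$, so that $\sum_{i=0}^{M}\Pi_{i}=1$ resolves the identity into pairwise orthogonal projectors of ranks $\mathrm{rank}\,\Pi_{0}=q^{n}-MK$ and $\mathrm{rank}\,\Pi_{m}=K$ for $m\geq 1$. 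Every $E\in B(H)$ then decomposes uniquely as $E=\sum_{i,j=0}^{M}\Pi_{i}E\Pi_{j}$, and the block subspaces $V_{ij}\coloneqq\Pi_{i}B(H)\Pi_{j}\cong\mathrm{Hom}\!\left(\mathrm{range}\,\Pi_{j},\mathrm{range}\,\Pi_{i}\right)$ are mutually orthogonal under the Hilbert--Schmidt inner product and have $\dim V_{ij}=(\mathrm{rank}\,\Pi_{i})(\mathrm{rank}\,\Pi_{j})$; consequently $B(H)=\bigoplus_{i,j}V_{ij}$.

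Now I would read off from (\ref{hkl}) exactly which blocks are constrained. The conditions only touch the blocks $V_{ab}$ with $a,b\in\left[M\right]$: for $a\neq b$ the component of $E$ in $V_{ab}$ must vanish, and for $a=b$ it must be a scalar multiple of $P_{a}$, i.e.\ must lie in the one-dimensional subspace $\mathbb{C}P_{a}\subseteq V_{aa}$. The key point is that the scalars $\lambda_{E,a}$ may be chosen independently for each $a$, so these $M$ one-dimensional constraints are mutually independent and impose no relation between distinct blocks; every block $V_{0j}$ or $V_{i0}$ touching $P_{\epsilon}$ is entirely free. Hence $\mathcal{D}$ is the direct sum of $V_{00}$, the $2M$ cross blocks $V_{m0},V_{0m}$, and the $M$ lines $\mathbb{C}P_{a}$, giving
\[
\dim\mathcal{D}=(q^{n}-MK)^{2}+2MK(q^{n}-MK)+M\cdot 1+M(M-1)\cdot 0.
\]
Writing $r=q^{n}$ and $s=MK$, the first two terms telescope as $(r-s)^{2}+2s(r-s)=(r-s)(r+s)=r^{2}-s^{2}$, so $\dim\mathcal{D}=q^{2n}-(MK)^{2}+M$, as claimed.

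For the final sentence, note that the $M=1$ case of this computation (a single inner code of dimension $KM$) recovers the Knill--Laflamme count for an $\left(\!\left(n,KM\right)\!\right)_{q}$ quantum code, whose detectable errors form a space of dimension $q^{2n}-(KM)^{2}+1$; since $M>1$ we have $q^{2n}-(MK)^{2}+M>q^{2n}-(MK)^{2}+1$, so the hybrid code has a strictly larger detectable-error space. I expect the only step needing care to be the justification that $B(H)=\bigoplus_{i,j}V_{ij}$ with additive dimensions — this rests on the orthogonality of the $\Pi_{i}$ and the Hilbert--Schmidt structure, so that the $(M+1)^{2}$ blocks really do form an orthogonal decomposition — together with the observation that independence of the scalars $\lambda_{E,a}$ is precisely what promotes the lone ``$+1$'' of the quantum case to the ``$+M$'' of the hybrid case.
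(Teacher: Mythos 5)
Your proposal is correct and follows essentially the same route as the paper: both decompose $B(H)$ according to $H=\mathcal{C}_{1}\oplus\cdots\oplus\mathcal{C}_{M}\oplus\mathcal{C}^{\perp}$, observe that the detectability condition forces the block acting on $\mathcal{C}$ to be $\lambda_{E,1}1_{K}\oplus\cdots\oplus\lambda_{E,M}1_{K}$ (contributing $M$ free parameters) while all blocks touching $\mathcal{C}^{\perp}$ are unconstrained, and then count $q^{2n}-(MK)^{2}+M$, comparing with the $+1$ of the $\left(\!\left(n,KM\right)\!\right)_{q}$ quantum-code case. Your finer $(M+1)^{2}$-block bookkeeping and the telescoping identity are just a more explicit rendering of the paper's $2\times 2$ block-matrix count.
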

\begin{proof}
It is clear that any linear combination of detectable errors is detectable. If we choose a basis adapted to the orthogonal decomposition $H=\mathcal{C} \oplus \mathcal{C}^{\perp}$ with $$\mathcal{C}=\mathcal{C}_{1}\oplus\mathcal{C}_{2}\oplus \cdots \oplus \mathcal{C}_{M},$$ then an error $E$ is represented by a matrix of the form 
$$ 
\left(\begin{array}{cc}
A & R \\ 
S & T
\end{array}\right),
$$
where the blocks $A$ and $T$ correspond to the subspaces $\mathcal{C}$ and $\mathcal{C}^{\perp}$ respectively. Since $E$ is detectable, the $MK\times MK$ matrix $A$ must satisfy $$A = \lambda_{E,1} 1_K \oplus \lambda_{E,2} 1_{K} \oplus \cdots \oplus
\lambda_{E,M} 1_{K},$$ where $1_{K}$ denote a $K\times K$ identity matrix, but $R$, $S$, and $T$ can be arbitrary. Therefore, the dimension of the vector space of detectable errors is given by $q^{2n} - \left(MK\right)^{2} + M$.

In the case of an $\left(\!\left(n, KM\right)\!\right)_{q}$ quantum code, $A$ must satisfy $A=\lambda_{E} 1_{KM}$, so the vector space of detectable errors has dimension $q^{2n}-\left(KM\right)^{2}+1$, which is strictly less than $q^{2n} - \left(MK\right)^{2} + M$ when $M>1$.
\end{proof}

We briefly recall the concept of a nice error basis (see \cite{Klappenecker2002, Klappenecker2003, Knill1996} for further details), so that we can define a suitable notion of weight for the errors. Let ${G}$ be a group of order $q^{2}$ with identity element~1 and $\mathcal{U}\!\left(q\right)$ be the group of $q\times q$ unitary matrices.  A \textit{nice error basis}\/ on $\mathbb{C}^{q}$ is a set ${\cal E}=\{\rho(g)\in {\cal U}(q) \,|\, g\in {G}\}$ of unitary matrices such that
\begin{tabbing}
i)\= (iiiii) \= \kill
\>(i)   \> $\rho(1)$ is the identity matrix,\\[1ex]
\>(ii)  \> $\trace\rho(g)=0$ for all $g\in G\setminus \{1\}$,\\[1ex]
\>(iii) \> $\rho(g)\rho(h)=\omega(g,h)\,\rho(gh)$ for all $g,h\in{G}$,
\end{tabbing}
where $\omega(g,h)$ is a nonzero complex number depending on $(g,h)\in G\times G$; the function $\omega\colon G\times G\rightarrow\mathbb{C}^\times$ is called the factor system of $\rho$. We call $G$ the \textit{index group}\/ of the error basis ${\cal E}$. The nice error basis that we have introduced so far generalizes the Pauli basis to systems with $q\ge 2$ levels. 

We can obtain a nice error basis $\mathcal{E}_n$ on $H\cong \mathbb{C}^{q^n}$ by tensoring $n$ elements of $\mathcal{E}$, so $$ \mathcal{E}_n = \mathcal{E}^{\otimes n} = \{ E_1 \otimes E_2\otimes\cdots \otimes E_n \mid E_k \in \mathcal{E}, 1\le k\le n\}.$$ The weight of an element in $\mathcal{E}_n$ are the number of non-identity tensor components. We write $\wt(E)=d$ to denote that the element $E$ in $\mathcal{E}_n$ has weight $d$. A hybrid code with parameters $\left(\!\left(n,K\!:\!M,d\right)\!\right)_{q}$ has \emph{minimum distance} $d$ if it can detect all errors of weight less than $d$.

\begin{example}
\label{nonaddex}
To construct our nonadditive hybrid code $\mathcal{C}$ we will combine two known degenerate stabilizer codes. The first code $\mathcal{C}_{a}$ is the $\left[\!\left[6,1,3\right]\!\right]_{2}$ code constructed by extending the $\left[\!\left[5,1,3\right]\!\right]_{2}$ Hamming code, see \cite{Calderbank1998}, where the stabilizer is given by $$\left\langle XXZIZI, ZXXZII, IZXXZI, ZIZXXI, IIIIIX\right\rangle.$$ The second code $\mathcal{C}_{b}$ is a $\left[\!\left[6,1,3\right]\!\right]_{2}$ code not equivalent to $\mathcal{C}_{a}$, see \cite{Shaw2008}. Its stabilizer is given by $$\left\langle YIZXXY, ZXIIXZ, IZXXXX, IIIZIZ, ZZZIZI\right\rangle.$$

We can check that the resulting two codes are indeed orthogonal to each other. The resulting code $\mathcal{C}$ is a $\left(\!\left(6,2\!:\!2,1\right)\!\right)_{2}$ nonadditive hybrid code, since there are several errors of weight one such that $P_{b}EP_{a}\neq0$, for example $E=IIIIXI$. This shows that even though $\mathcal{C}_{a}$ and $\mathcal{C}_{b}$ are optimal quantum codes on their own, together they make a hybrid code with an extremely poor minimum distance. Later we will see how to construct hybrid codes with better minimum distances.
\end{example}

\subsection{Genuine Hybrid Codes}

In general, it is not difficult to construct hybrid codes using quantum stabilizer codes. As Grassl et al. \cite{Grassl2017} pointed out, there are three simple constructions of hybrid codes that do not offer any real advantage over quantum error-correcting codes:

\begin{proposition}[{\cite{Grassl2017}}]\label{trivcon}
Hybrid codes can be constructed using the following ``trivial" constructions:
\begin{enumerate}
\item Given an $\left(\!\left(n,KM,d\right)\!\right)_{q}$ quantum code of composite dimension $KM$, there exisits a hybrid code with parameters $\left(\!\left(n,K\!:\!M,d\right)\!\right)_{q}$.
\item Given an $\left[\!\left[n,k\!:\!m,d\right]\!\right]_{q}$ hybrid code with $k>0$, there exists a hybrid code with parameters $\left[\!\left[n,k-1\!:\!m+1,d\right]\!\right]_{q}$.
\item Given an $\left[\!\left[n_{1},k_{1},d\right]\!\right]_{q}$ quantum code and an $\left[n_{2},m_{2},d\right]_{q}$ classical code, there exists a hybrid code with parameters $\left[\!\left[n_{1}+n_{2},k_{1}\!:\!m_{2},d\right]\!\right]_{q}$.
\end{enumerate}
\end{proposition}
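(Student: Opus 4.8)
The plan is to verify each of the three constructions by exhibiting the outer code explicitly and checking the detectability condition~\eqref{hkl} (equivalently, the hybrid Knill--Laflamme condition~\eqref{klvec}) against the detectability condition of the underlying quantum code(s). For construction~(1), I would take the $\left(\!\left(n,KM,d\right)\!\right)_{q}$ quantum code $\mathcal{C}$ and simply partition an orthonormal basis of $\mathcal{C}$ into $M$ blocks of size $K$, declaring each block to span one inner code $\mathcal{C}_{m}$. Since $\mathcal{C}$ detects every error $E$ of weight less than $d$, the matrix of $E$ restricted to $\mathcal{C}$ is $\lambda_{E}1_{KM}$; reading this block by block gives $P_{b}EP_{a}=\lambda_{E}\delta_{ab}P_{a}$, which is exactly~\eqref{hkl} with $\lambda_{E,a}=\lambda_{E}$ independent of $a$. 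Hence the partition is an $\left(\!\left(n,K\!:\!M,d\right)\!\right)_{q}$ hybrid code. (One should also note $M\mid KM$ trivially, so the dimensions are consistent.)

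For construction~(2), starting from an $\left[\!\left[n,k\!:\!m,d\right]\!\right]_{q}$ hybrid code with $k>0$, I would take its $q^{m}$ inner codes, each of dimension $q^{k}$, and split each one into $q$ orthogonal subcodes of dimension $q^{k-1}$ by partitioning an orthonormal basis; relabeling, this yields $q^{m+1}$ inner codes of dimension $q^{k-1}$, i.e.\ an $\left[\!\left[n,k-1\!:\!m+1,d\right]\!\right]_{q}$ code. The detectability check is the same as in~(1): refining the block decomposition of $A=\bigoplus_{a}\lambda_{E,a}1_{q^{k}}$ into smaller identity blocks preserves the block-diagonal scalar form~\eqref{hkl}, with the new scalars inherited from the old ones. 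This construction can also be seen as a special case of~(1) applied inside each inner code, so very little is new here.

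For construction~(3), given an $\left[\!\left[n_{1},k_{1},d\right]\!\right]_{q}$ quantum code $Q$ with orthonormal basis $\{\ket{q_{j}}\}$ and a classical $\left[n_{2},m_{2},d\right]_{q}$ code with codewords $\{w_{a}\}\subseteq\mathbb{F}_{q}^{n_{2}}$, I would define, for each classical codeword $w_{a}$, the inner code $\mathcal{C}_{a}=Q\otimes\mathrm{span}\{\ket{w_{a}}\}$, where $\ket{w_{a}}$ is the corresponding computational basis state on $n_{2}$ qudits. Distinct codewords $w_{a}\neq w_{b}$ give orthogonal states $\ket{w_{a}},\ket{w_{b}}$, so the inner codes are mutually orthogonal, and $K=q^{k_{1}}$, $M=q^{m_{2}}$. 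To check detectability of an error $E=E_{1}\otimes E_{2}$ of weight $<d$ (it suffices to treat basis errors and extend by linearity), write $\bra{q_{i}}\!\bra{w_{b}}E_{1}\otimes E_{2}\ket{q_{j}}\ket{w_{a}}=\bra{q_{i}}E_{1}\ket{q_{j}}\cdot\bra{w_{b}}E_{2}\ket{w_{a}}$. If $\wt(E_{1})<d$ then $\bra{q_{i}}E_{1}\ket{q_{j}}=\mu_{E_{1}}\delta_{ij}$ by the Knill--Laflamme condition for $Q$; if $\wt(E_{1})\ge d$ then $\wt(E_{2})=0$, so $E_{2}$ is (a scalar times) the identity on $n_{2}$ qudits and $\bra{w_{b}}E_{2}\ket{w_{a}}=\nu\,\delta_{ab}$ while $E_{1}$ has weight $<d$ on $n_{1}$ qudits, again giving $\bra{q_{i}}E_{1}\ket{q_{j}}=\mu_{E_{1}}\delta_{ij}$. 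In the remaining range I would observe that the classical distance $d$ forces $\bra{w_{b}}E_{2}\ket{w_{a}}$ to be diagonal in $a,b$ whenever $E_{2}$ acts nontrivially on fewer than $d$ coordinates (a permutation-type computational-basis error supported on $<d$ positions cannot map one codeword to a distinct one), so in all cases~\eqref{klvec} holds with $\lambda_{E,a}$ depending only on $E$ (in fact independent of $a$ here).

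The only genuinely delicate point is construction~(3): one must be careful about which errors $E_{2}$ on the classical block can actually arise from the nice error basis $\mathcal{E}_{n_{2}}$ and confirm that each such basis error of weight $<d$ either annihilates the span of the codewords across distinct classes or acts as a scalar within each class, using the minimum distance $d$ of the classical code. Constructions~(1) and~(2) are essentially bookkeeping on orthonormal bases and require no new idea beyond the block-matrix picture already used in the proof of Proposition~\ref{detectset}.
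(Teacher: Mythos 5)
The paper does not actually prove Proposition~\ref{trivcon}; it is quoted from Grassl, Lu, and Zeng \cite{Grassl2017}, so there is no in-paper argument to compare against. Your verification supplies the intended standard proofs and is essentially correct: (1) and (2) are exactly the block-refinement observation (if $P_aEP_a=\lambda_{E,a}P_a$ and $P_bEP_a=0$, any orthogonal splitting of the projectors inherits the hybrid Knill--Laflamme form), and (3) is the tensor-product construction $\mathcal{C}_a=Q\otimes\mathrm{span}\{\ket{w_a}\}$ with the off-diagonal terms killed by the classical distance. Two small blemishes in (3): first, your case $\wt(E_1)\ge d$ is vacuous (the total weight is below $d$, so $\wt(E_1)<d$ always), and as written that case is self-contradictory; the cleaner argument is that $E_2$ acts as the identity outside its support of size $\wt(E_2)<d$, so $\bra{w_b}E_2\ket{w_a}\neq 0$ forces $w_a$ and $w_b$ to agree outside that support, hence $a=b$ by the classical minimum distance --- this works for any nice error basis, not just ``permutation-type'' errors. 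Second, your parenthetical claim that $\lambda_{E,a}$ is independent of $a$ in construction (3) is false in general (a single phase-type error on the classical block gives $\bra{w_a}E_2\ket{w_a}$ depending on $w_a$), but this is harmless since the hybrid condition~(\ref{klvec}) explicitly permits dependence on $a$.
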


We say that a hybrid code is \emph{genuine} if it cannot be constructed using one of the above constructions, following the work of Yu et al. on genuine nonadditive codes \cite{Yu2015}. We also refer to a hybrid stabilizer code that provides an advantage over quantum stabilizer codes as a genuine hybrid stabilizer code. While all known genuine hybrid codes are in fact hybrid stabilizer codes, the linear programming bounds in Section \ref{lpb} do not prohibit genuine nonadditive hybrid codes, and may give us some hints as to their parameters.

Multiple genuine hybrid stabilizer codes with small parameters were constructed by Grassl et al. in \cite{Grassl2017}, all of which have degenerate inner codes. Having degenerate inner codes can allow for a more efficient packing of the inner codes inside the outer code than is possible when using nondegenerate codes, giving a hybrid code with parameters superior to those using the first construction of Proposition \ref{trivcon}. However, they do not exclude the possibility that there is a genuine hybrid code where all of the inner codes are nondegenerate. Here, we show that for a genuine hybrid code, at least one of its inner codes must be impure. Recall that a quantum code is \emph{pure} if trace-orthogonal errors map the code to orthogonal subspaces. A code that is not pure is called \emph{impure}.

\begin{proposition}
Suppose $\mathcal{C}$ is a genuine $\left(\!\left(n,K\!:\!M,d\right)\!\right)_{q}$ hybrid code. Then at least one inner code $\mathcal{C}_{m}$ of the hybrid code $\mathcal{C}$ is impure.
\end{proposition}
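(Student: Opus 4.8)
The plan is to argue by contraposition: I will show that if every inner code $\mathcal{C}_m$ is pure, then $\mathcal{C}$ is not genuine, because the outer code $\mathcal{C}=\bigoplus_{m=1}^{M}\mathcal{C}_m$ is then itself an $\left(\!\left(n,MK,d\right)\!\right)_q$ quantum code, so the hybrid code $\mathcal{C}$ arises from the first construction in Proposition~\ref{trivcon}.

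First I would translate the hypotheses into the language of the nice error basis $\mathcal{E}_n$. Since $\mathcal{C}$ has minimum distance $d$, every $E\in\mathcal{E}_n$ with $1\le\wt(E)\le d-1$ is detectable, so Equation~(\ref{hkl}) gives $P_bEP_a=0$ whenever $a\neq b$ and $P_aEP_a=\lambda_{E,a}P_a$. Purity of the inner code $\mathcal{C}_m$ says precisely that $\lambda_{E,m}=0$ for every such non-identity $E$: because distinct elements of $\mathcal{E}_n$ are trace-orthogonal (using $\trace\rho(g)=0$ for $g\neq 1$), ``trace-orthogonal errors map $\mathcal{C}_m$ to orthogonal subspaces'' is equivalent to $P_mEP_m=0$ for all $E\neq I$ with $1\le\wt(E)\le d-1$. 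Hence, assuming all inner codes are pure, $P_aEP_a=0$ for every $a$ and every non-identity $E$ of weight at most $d-1$.

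Next I would reassemble these relations for $P=P_1+\cdots+P_M$. For a non-identity $E\in\mathcal{E}_n$ with $\wt(E)\le d-1$ we get $PEP=\sum_{a,b}P_bEP_a=\sum_{a}\lambda_{E,a}P_a=0$, while $PIP=P$. These are exactly the Knill--Laflamme conditions certifying that the $MK$-dimensional subspace $\mathcal{C}\subseteq H$ is a (pure) $\left(\!\left(n,MK,d\right)\!\right)_q$ quantum code. Applying the first construction of Proposition~\ref{trivcon} to this quantum code, decomposed into the orthogonal subspaces $\mathcal{C}_1,\dots,\mathcal{C}_M$, returns exactly the hybrid code $\mathcal{C}$; thus $\mathcal{C}$ is one of the trivial constructions, and hence not genuine.

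I do not expect a real obstacle: once detectability, purity, and genuineness are lined up the argument is short bookkeeping. The one point that needs care is the step identifying $\mathcal{C}$ as a fully quantum code, namely observing that the scalars $\lambda_{E,a}$ collapse to a single scalar ($0$ on non-identity errors of weight at most $d-1$, and $1$ on the identity), so that $\mathcal{C}$ satisfies the fully quantum Knill--Laflamme conditions rather than merely the more permissive hybrid ones; one should also fix, consistently with Proposition~\ref{trivcon}, the reading of ``pure'' as $P_mEP_m=0$ for $1\le\wt(E)\le d-1$ and of ``genuine'' as not obtainable from those trivial constructions.
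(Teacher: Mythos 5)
Your proof is correct and follows essentially the same route as the paper's: assume all inner codes are pure, combine purity ($P_aEP_a=0$) with detectability ($P_bEP_a=0$ for $a\neq b$) to get $PEP=0$ for all nonidentity errors of weight less than $d$, conclude that $\mathcal{C}$ is an $\left(\!\left(n,KM,d\right)\!\right)_q$ quantum code, and invoke the first trivial construction to contradict genuineness. The only difference is cosmetic (contraposition versus contradiction, plus your more explicit unpacking of the definitions of purity and of the collapse of the scalars $\lambda_{E,a}$), which the paper leaves implicit.
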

\begin{proof}
Seeking a contradiction, suppose that every inner code of the hybrid code $\mathcal{C}$ is pure. For $m\in\left[M\right]$, let $P_{m}$ denote the orthogonal projector onto the $m$-th inner code of the hybrid code $\mathcal{C}$. For every nonscalar error operator $E$ of weight less than $d$, we have 
$$ P_{a} E P_{b} = 0,$$
where $a, b\in\left[M\right]$. Let $P=P_{1} + P_{2} + \cdots + P_{M}$ denote the projector onto the $KM$-dimensional vector space spanned by the inner codes. Then 
$$ PEP=0,$$
so the image of $P$ is an $\left(\!\left(n,KM,d\right)\!\right)_{q}$ quantum code, contradicting that the hybrid code $\mathcal{C}$ is genuine. 
\end{proof}

Since for stabilizer codes the definitions of impure and degenerate codes coincide, genuine hybrid stabilizer codes necessarily require that one of the inner codes is degenerate. Therefore, one of the difficulties in constructing families of genuine codes is finding nontrivial degenerate codes. Unfortunately, there are few known families of impure or degenerate codes, see for example \cite{Aly2006, Aly2007}, and they typically have minimum distances much lower than optimal quantum codes, suggesting they are not particularly suitable to use in constructing genuine hybrid codes.

\subsection{Hybrid Stabilizer Codes}

All of the hybrid codes constructed by Grassl et al. \cite{Grassl2017} were given using the codeword stabilizer (CWS)/union stabilizer framework, see \cite{Cross2009, Grassl2008}, which we will briefly describe here. Starting with a quantum code $\mathcal{C}_{0}$, we choose a set of $M$ coset representatives $t_{i}$ from the normalizer of $\mathcal{C}_{0}$ (we will always take $t_{1}$ to be $I$), and then construct the code $$\mathcal{C}=\bigcup\limits_{i\in\left[M\right]}t_{i}\mathcal{C}_{0}.$$ In the case of hybrid codes, $t_{i}\mathcal{C}_{0}$ are our inner codes and $\mathcal{C}$ is our outer code. If both $\mathcal{C}_{0}$ and $\mathcal{C}$ are stabilizer codes, we say that $\mathcal{C}$ is a hybrid stabilizer code.

The generators that define a hybrid code can be divided into those that generate the quantum stabilizer $\mathcal{S}_{\mathcal{Q}}$ which stabilizes the outer code $\mathcal{C}$ and those that generate the classical stabilizer $\mathcal{S}_{\mathcal{C}}$ which together with $\mathcal{S}_{\mathcal{Q}}$ stabilizes the inner code $\mathcal{C}_{0}$ \cite{Kremsky2008}. The generators that define the $\left[\!\left[7,1\!:\!1,3\right]\!\right]_{2}$ hybrid stabilizer code given in \cite{Grassl2017} are given in (\ref{gen7}), where the generators of $\mathcal{S}_{\mathcal{Q}}$ are given above the dotted line, the generators of $\mathcal{S}_{\mathcal{C}}$ are between the dotted and solid line, the normalizer of the inner code $\mathcal{C}_{0}$ is generated by all elements above the double line, and the normalizer of the outer code is generated by all of the elements. 

\begin{equation}
\label{gen7}
\left(\mkern-5mu
\begin{tikzpicture}[baseline=-.5ex]
\matrix[
  matrix of math nodes,
  column sep=.25ex, row sep=-.25ex
] (m)
{
X & I & I & Z & Y & Y & Z \\
Z & X & I & X & Z & I & X \\
Z & I & X & X & I & Z & X \\
Z & I & Z & Z & X & I & I \\
I & Z & I & Z & I & X & X \\
Z & I & I & I & I & I & X \\
I & I & I & X & Z & Z & X \\
I & I & I & Z & X & X & I \\
I & I & I & I & X & Y & Y \\
};
\draw[line width=1pt, line cap=round, dash pattern=on 0pt off 2\pgflinewidth]
  ([yshift=.2ex] m-5-1.south west) -- ([yshift=.2ex] m-5-7.south east);
\draw[line width=.5pt]
  ([yshift=.2ex] m-6-1.south west) -- ([yshift=.2ex] m-6-7.south east);
\draw[line width=.5pt]
  ([yshift=.22ex] m-8-1.south west) -- ([yshift=.2ex] m-8-7.south east);
\draw[line width=.5pt]
  ( m-8-1.south west) -- ( m-8-7.south east);
\end{tikzpicture}\mkern-5mu
\right)
\end{equation}

Following Kremsky et al. \cite{Kremsky2008}, we will often only include the stabilizer generators, as they are sufficient to fully define the hybrid code, as shown in the following proposition:

\begin{proposition}
\label{hybgenconstr}
Let $\mathcal{C}$ be an $\left[\!\left[n,k\!:\!m,d\right]\!\right]_{p}$ hybrid stabilizer code over a finite field of prime order $p$ with quantum stabilizer $\mathcal{S}_{\mathcal{Q}}$ and classical stabilizer $\mathcal{S}_{\mathcal{C}}=\left\langle g_{1}^{\mathcal{C}}, \dots, g_{m}^{\mathcal{C}}\right\rangle$. Then the stabilizer code $\mathcal{C}_{c}$ associated with classical message $c\in\mathbb{F}_{p}^{m}$ is given by the stabilizer $$\left\langle \mathcal{S}_{\mathcal{Q}}, \omega^{c_{1}}g_{1}^{\mathcal{C}}, \dots, \omega^{c_{m}}g_{m}^{\mathcal{C}}\right\rangle,$$ where $c_{i}$ is the $i$-th entry of $c$ and $\omega$ is a primitive complex $p$-th root of unity.
\end{proposition}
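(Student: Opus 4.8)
The plan is to identify, for each classical message $c\in\mathbb{F}_p^m$, the inner code associated with $c$ as a simultaneous eigenspace of the generators $g_1^\mathcal{C},\dots,g_m^\mathcal{C}$ of $\mathcal{S}_\mathcal{C}$ inside the outer code $\mathcal{C}=\mathcal{C}(\mathcal{S}_\mathcal{Q})$, and then to recognize that eigenspace as a genuine stabilizer code with the stabilizer claimed. First I would collect the structural facts that come for free from the definition of a hybrid stabilizer code: since $\mathcal{S}_\mathcal{Q}$ together with $\mathcal{S}_\mathcal{C}$ stabilizes the inner code $\mathcal{C}_0$, the generators $g_i^\mathcal{C}$ lie in the (abelian, scalar-free) stabilizer $\mathcal{S}_0$ of $\mathcal{C}_0$, so they commute with one another and with every element of $\mathcal{S}_\mathcal{Q}$; a generating set of $\mathcal{S}_\mathcal{Q}$ together with $g_1^\mathcal{C},\dots,g_m^\mathcal{C}$ forms an independent set of $n-k$ generators of $\mathcal{S}_0$; and, since $p$ is prime, every element of $\mathcal{S}_0$ has order dividing $p$, so each $g_i^\mathcal{C}$ has order $p$ and its eigenvalues are powers of $\omega$.

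Next, for $c\in\mathbb{F}_p^m$ set $\mathcal{S}_c=\langle\mathcal{S}_\mathcal{Q},\omega^{c_1}g_1^\mathcal{C},\dots,\omega^{c_m}g_m^\mathcal{C}\rangle$. I would verify that $\mathcal{S}_c$ is a legitimate stabilizer group: it is abelian because $\mathcal{S}_0$ is and only generators were rescaled by scalars, and it contains no nontrivial scalar because every element of $\mathcal{S}_c$ is a root of unity times an element of $\mathcal{S}_0$, and the independence of the $n-k$ generators of $\mathcal{S}_0$ forces both that $\mathcal{S}_0$-element and the accompanying scalar to be trivial. Hence $\mathcal{C}(\mathcal{S}_c)$ is a stabilizer code of dimension $p^{\,n-(n-k)}=p^k=K$. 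By construction $\mathcal{C}(\mathcal{S}_c)\subseteq\mathcal{C}(\mathcal{S}_\mathcal{Q})=\mathcal{C}$, and a state of $\mathcal{C}$ lies in $\mathcal{C}(\mathcal{S}_c)$ precisely when it is a joint eigenvector of $(g_1^\mathcal{C},\dots,g_m^\mathcal{C})$ with eigenvalue $(\omega^{-c_1},\dots,\omega^{-c_m})$, i.e.\ $\mathcal{C}(\mathcal{S}_c)$ is exactly that joint eigenspace.

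Then I would assemble the decomposition. As $c$ ranges over $\mathbb{F}_p^m$, the subspaces $\mathcal{C}(\mathcal{S}_c)$ are pairwise orthogonal (distinct joint eigenspaces of the commuting unitaries $g_i^\mathcal{C}$), each of dimension $K=p^k$, and there are $M=p^m$ of them, so $\dim\bigoplus_c\mathcal{C}(\mathcal{S}_c)=p^{k+m}=MK=\dim\mathcal{C}$ and therefore $\mathcal{C}=\bigoplus_{c\in\mathbb{F}_p^m}\mathcal{C}(\mathcal{S}_c)$. Thus $\{\mathcal{C}(\mathcal{S}_c)\mid c\in\mathbb{F}_p^m\}$ is precisely a family of $M$ orthogonal $K$-dimensional stabilizer codes whose direct sum is the outer code, one code per classical message (read off as the $\mathcal{S}_\mathcal{C}$-syndrome), with $\mathcal{S}_0$ recovering $\mathcal{C}_0$ as it must. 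To reconcile this with the CWS/union-stabilizer description used earlier, I would note that one can pick the coset representatives $t_c$ as Pauli operators normalizing $\mathcal{S}_\mathcal{Q}$ with $t_i g_j^\mathcal{C} t_i^{-1}=\omega^{\delta_{ij}}g_j^\mathcal{C}$, so that $t_c\mathcal{C}_0=\mathcal{C}(t_c\mathcal{S}_0t_c^{-1})=\mathcal{C}(\mathcal{S}_c)$.

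I do not expect a real obstacle here; the content is bookkeeping. The one thing to fix carefully is the labeling convention --- the proposition tags message $c$ with $\langle\mathcal{S}_\mathcal{Q},\omega^{c_1}g_1^\mathcal{C},\dots\rangle$, while the joint eigenvalue actually read off $\mathcal{S}_\mathcal{C}$ is $\omega^{-c}$, a harmless relabeling $c\mapsto-c$ of the classical messages --- and the one computation that deserves a line is that rescaling the $g_i^\mathcal{C}$ by $p$-th roots of unity keeps the generators independent and introduces no scalar into the group, which follows at once from the independence of the generators of $\mathcal{S}_0$.
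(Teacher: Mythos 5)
Your proposal is correct and follows essentially the same route as the paper: both decompose the outer code $\mathcal{C}(\mathcal{S}_{\mathcal{Q}})$ into joint eigenspaces of the commuting classical generators $g_{i}^{\mathcal{C}}$ and then observe that an eigenvector with eigenvalue $\omega^{-c_i}$ is a $+1$ eigenvector of $\omega^{c_i}g_{i}^{\mathcal{C}}$, so each eigenspace is the stabilizer code of $\left\langle \mathcal{S}_{\mathcal{Q}}, \omega^{c_{1}}g_{1}^{\mathcal{C}}, \dots, \omega^{c_{m}}g_{m}^{\mathcal{C}}\right\rangle$. The extra bookkeeping you supply (independence of generators, absence of nontrivial scalars, dimension and orthogonality counts, and the reconciliation with the coset-representative picture) is left implicit in the paper but is consistent with its argument.
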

\begin{proof}
There are $p^{k+m}$ codewords stabilized by $\mathcal{S}_{\mathcal{Q}}$. Each of these codewords is an eigenvector of $g_{i}^{\mathcal{C}}$, which naturally partitions the code into $p$ cosets based on eigenvalues. Repeating this with all of the classical generators, we get $p^{m}$ cosets of codewords each of size $p^{k}$. Since $v$ being an eigenvector of $g_{i}^{\mathcal{C}}$ with eigenvalue $\omega^{-1}$ means that it is a $+1$ eigenvector of $\omega g_{i}^{\mathcal{C}}$, therefore each coset is the $+1$ eigenspace of a stabilizer of the form $\left\langle \mathcal{S}_{\mathcal{Q}}, \omega^{c_{1}}g_{1}^{\mathcal{C}}, \dots, \omega^{c_{m}}g_{m}^{\mathcal{C}}\right\rangle$, where the string $c\in\mathbb{F}_{p}^{m}$ can be used to index the stabilizer codes.
\end{proof}

\section{Weight Enumerators and\\ Linear Programming Bounds}

Weight enumerators for quantum codes were introduced by Shor and Laflamme \cite{Shor1997}, and as with their classical counterparts they can be used to give good bounds on code parameters using linear programming, see \cite{Ashikhmin1999, Ketkar2006}. Grassl et al. \cite{Grassl2017} gave weight enumerators and linear programming bounds for hybrid stabilizer codes, but these weight enumerators will not work for nonadditive hybrid codes such as the one given in Example \ref{nonaddex}. In this section, we define weight enumerators for general hybrid codes following the approach of Shor and Laflamme \cite{Shor1997} and Rains \cite{Rains1998} and use them to derive linear programming bounds for general hybrid codes.

\subsection{Weight Enumerators}

For an $\left(\!\left(n,K\!:\!M,d\right)\!\right)_{q}$ hybrid code $\mathcal{C}$ defined by the projector $P=P_{1}+\cdots+P_{M}$ and a nice error base $\mathcal{E}_{n}$ as defined in Section \ref{seced}, we define the two weight enumerators of the code following Shor and Laflamme \cite{Shor1997}: $$A\!\left(z\right)=\sum\limits_{d=0}^{n}A_{d}z^{d}\text{ and }B\!\left(z\right)=\sum\limits_{d=0}^{n}B_{d}z^{d},$$ where the coefficients are given by $$A_{d}=\frac{1}{K^{2}M^{2}}\sum\limits_{\substack{E\in \mathcal{E}_n\\ \wt(E)=d}}\Tr\!\left(EP\right)\Tr\!\left(E^{*}P\right)$$ and $$B_{d}=\frac{1}{KM}\sum\limits_{\substack{E\in \mathcal{E}_n\\ \wt(E)=d}}\Tr\!\left(EPE^{*}P\right).$$

We can also define weight enumerators using the inner code projectors $P_{a}$. Let $$A^{\left(a,b\right)}\!\left(z\right)=\sum\limits_{d=0}^{n}A_{d}^{\left(a,b\right)}z^{d}\text{ and }B^{\left(a,b\right)}\!\left(z\right)=\sum\limits_{d=0}^{n}B_{d}^{\left(a,b\right)}z^{d},$$ where $$A_{d}^{\left(a,b\right)}=\frac{1}{K^{2}}\sum\limits_{\substack{E\in \mathcal{E}_n\\ \wt(E)=d}}\Tr\!\left(EP_{a}\right)\Tr\!\left(E^{*}P_{b}\right)$$ and $$B_{d}^{\left(a,b\right)}=\frac{1}{K}\sum\limits_{\substack{E\in \mathcal{E}_n\\ \wt(E)=d}}\Tr\!\left(EP_{a}E^{*}P_{b}\right).$$ Note that $A^{\left(a,a\right)}\!\left(z\right)$ and $B^{\left(a,a\right)}\!\left(z\right)$ are the weight enumerators of the quantum code associated with projector $P_{a}$. We can then write the weight enumerators for the outer code in terms of the weight enumerators for the inner codes:

\begin{lemma}
The weight enumerators of $\mathcal{C}$ can be written as $$A\!\left(z\right)=\frac{1}{M^{2}}\sum\limits_{a,b=1}^{M}A^{\left(a,b\right)}\!\left(z\right)\text{ and }B\!\left(z\right)=\frac{1}{M}\sum\limits_{a,b=1}^{M}B^{\left(a,b\right)}\!\left(z\right).$$
\end{lemma}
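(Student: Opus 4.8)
The plan is to expand the definitions of $A(z)$ and $B(z)$ for the outer code and substitute $P = P_1 + \cdots + P_M$ directly, using bilinearity of the trace. For the $A(z)$ enumerator, I would start from
$$A_d = \frac{1}{K^2M^2}\sum_{\substack{E\in\mathcal{E}_n\\ \wt(E)=d}}\Tr(EP)\Tr(E^*P),$$
and write $\Tr(EP) = \sum_{a=1}^M \Tr(EP_a)$ and $\Tr(E^*P) = \sum_{b=1}^M \Tr(E^*P_b)$, so that $\Tr(EP)\Tr(E^*P) = \sum_{a,b=1}^M \Tr(EP_a)\Tr(E^*P_b)$. Pulling the finite double sum over $a,b$ outside the sum over errors $E$ of weight $d$, and recognizing the inner sum as $K^2 A_d^{(a,b)}$, gives $A_d = \frac{1}{M^2}\sum_{a,b=1}^M A_d^{(a,b)}$, which is the coefficientwise statement equivalent to the claimed identity for $A(z)$.

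For the $B(z)$ enumerator I would proceed analogously: starting from
$$B_d = \frac{1}{KM}\sum_{\substack{E\in\mathcal{E}_n\\ \wt(E)=d}}\Tr(EPE^*P),$$
I substitute $P = \sum_{a} P_a$ in the first slot and $P = \sum_b P_b$ in the second slot, so $EPE^*P = \sum_{a,b} EP_aE^*P_b$, and by linearity of the trace $\Tr(EPE^*P) = \sum_{a,b=1}^M \Tr(EP_aE^*P_b)$. Interchanging the finite sum over $a,b$ with the sum over $E$ and identifying the inner sum with $K\,B_d^{(a,b)}$ yields $B_d = \frac{1}{M}\sum_{a,b=1}^M B_d^{(a,b)}$. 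Summing each identity against $z^d$ over $d = 0,\dots,n$ then produces the two claimed generating-function identities.

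Both computations are essentially routine: the only things being used are the linearity of the trace in each argument, the fact that the expansion $P = P_1 + \cdots + P_M$ is exactly the definition of the outer-code projector from Section \ref{seced}, and the bookkeeping of normalization constants ($1/(K^2M^2)$ versus $M^2 \cdot 1/K^2$, and $1/(KM)$ versus $M \cdot 1/K$). I do not anticipate a genuine obstacle here; the one place to be slightly careful is making sure the cross terms are handled correctly — in particular that for $B_d$ one does \emph{not} get any simplification from $P_aP_b = 0$ when $a\neq b$, because the errors $E$ and $E^*$ sit between the two projectors, so all $M^2$ terms $\Tr(EP_aE^*P_b)$ genuinely contribute, and likewise the off-diagonal $A_d^{(a,b)}$ terms are in general nonzero. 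This is also what makes the lemma worth stating: it is not simply $A(z) = \frac{1}{M}\sum_a A^{(a,a)}(z)$.
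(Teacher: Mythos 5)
Your proposal is correct and follows essentially the same route as the paper: substitute $P=P_{1}+\cdots+P_{M}$ into the definitions of $A_{d}$ and $B_{d}$, use linearity of the trace to expand into the double sum over $a,b$, and match the normalization constants, with the $B$-case handled by the same argument. The paper's proof is exactly this computation, so no further comment is needed.
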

\begin{proof}
By linearity of the projector $P$ we have \begin{align*} A_{d} & = \frac{1}{K^{2}M^{2}}\sum\limits_{\substack{E\in \mathcal{E}_n\\ \wt(E)=d}}\Tr\!\left(EP\right)\Tr\!\left(E^{*}P\right) \\ & = \frac{1}{K^{2}M^{2}}\sum\limits_{\substack{E\in \mathcal{E}_n\\ \wt(E)=d}}\sum\limits_{a,b=1}^{M}\Tr\!\left(EP_{a}\right)\Tr\!\left(E^{*}P_{b}\right) \\ & = \frac{1}{M^{2}}\sum\limits_{a,b=1}^{M}A_{d}^{\left(a,b\right)}. \end{align*} We can then rewrite the weight enumerator as \begin{align*} A\!\left(z\right) & = \sum\limits_{d=0}^{n}A_{d}z^{d} \\ & = \frac{1}{M^{2}}\sum\limits_{d=0}^{n}\sum\limits_{a,b=1}^{M}A_{d}^{\left(a,b\right)}z^{d} \\ & =\frac{1}{M^{2}}\sum\limits_{a,b=1}^{M}A^{\left(a,b\right)}\!\left(z\right). \end{align*} The result for $B\!\left(z\right)$ follows from the same argument.
\end{proof}

While the weight enumerator $B\!\left(z\right)$ is the same as the one introduced by the authors in \cite{Nemec2018}, the weight enumerator $A\!\left(z\right)$ is different. There the $A^{\left(a,b\right)}\!\left(z\right)$ weight enumerators with $a\neq b$ were ignored, causing $A\!\left(z\right)$ and $B\!\left(z\right)$ to not satisfy the MacWilliams identity. The approach presented in this paper is more natural, as it treats both the inner and outer codes as quantum codes. The following result may be found in \cite{Rains1998, Shor1997}, which we include for completeness:

\begin{lemma}[{\cite{Rains1998, Shor1997}}]
\label{cauchyschwarz}
Let $\mathcal{C}$ be a $\left(\!\left(n,K\!:\!M\right)\!\right)_{q}$ hybrid code with weight distributions $A_{d}$ and $B_{d}$. Then for all integers $d$ in the range $0\leq d\leq n$ and all $a\in\left[M\right]$ we have
\begin{enumerate}
\item $0\leq A_{d}\leq B_{d}$
\item $0\leq A_{d}^{\left(a,a\right)}\leq B_{d}^{\left(a,a\right)}$.
\end{enumerate}
\end{lemma}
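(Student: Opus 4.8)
The plan is to prove the two inequality pairs by recognizing each quantity as a sum of manifestly nonnegative terms, and then bounding $A_d$ (resp. $A_d^{(a,a)}$) by $B_d$ (resp. $B_d^{(a,a)}$) via a Cauchy--Schwarz argument applied inside the trace inner product. First I would fix a weight $d$ and work error-by-error: for a single $E\in\mathcal{E}_n$ with $\wt(E)=d$, note that $E^{*}=E^{-1}$ up to a scalar of modulus one (since each tensor factor is unitary), so $\Tr(EP)\,\Tr(E^{*}P)=\Tr(EP)\,\overline{\Tr(EP)}=|\Tr(EP)|^2\ge 0$, using that $P=P^{*}$. Summing over all weight-$d$ errors and dividing by the positive constant $K^2M^2$ gives $A_d\ge 0$; the same computation with $P_a$ in place of $P$ and constant $K^2$ gives $A_d^{(a,a)}\ge 0$.

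For the $B$ side, I would write $\Tr(EPE^{*}P)=\Tr\bigl((PE^{*}P)^{*}(PE^{*}P)\bigr)=\|PE^{*}P\|_{\mathrm{HS}}^2\ge 0$, where $\|\cdot\|_{\mathrm{HS}}$ is the Hilbert--Schmidt (Frobenius) norm and I have again used $P=P^2=P^{*}$ to absorb the idempotents. Summing and dividing by $KM>0$ yields $B_d\ge 0$, and likewise $B_d^{(a,a)}\ge 0$ with $P_a$ and $K$. So all four quantities are nonnegative, which disposes of the left halves of (1) and (2).

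The substantive part is $A_d\le B_d$ and $A_d^{(a,a)}\le B_d^{(a,a)}$. Here I would apply the Cauchy--Schwarz inequality in the Hilbert--Schmidt inner product $\langle X,Y\rangle=\Tr(X^{*}Y)$ to the pair $X=P$ (or $P_a$) and $Y=PE^{*}P$ (or $P_aE^{*}P_a$): we have $|\Tr(EP)|=|\Tr(PEP)|=|\langle P, PE^{*}P\rangle|\le \|P\|_{\mathrm{HS}}\,\|PE^{*}P\|_{\mathrm{HS}}$, so $|\Tr(EP)|^2\le \|P\|_{\mathrm{HS}}^2\,\|PE^{*}P\|_{\mathrm{HS}}^2=(KM)\cdot \Tr(EPE^{*}P)$, since $\Tr(P)=KM=\|P\|_{\mathrm{HS}}^2$. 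Dividing through by $K^2M^2$ and summing over the weight-$d$ errors gives $A_d\le B_d$ term by term; replacing $P$ by $P_a$ and $KM$ by $K=\Tr(P_a)=\|P_a\|_{\mathrm{HS}}^2$ gives $A_d^{(a,a)}\le B_d^{(a,a)}$.

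The main obstacle is mostly bookkeeping rather than conceptual: one must be careful that the nice error basis elements satisfy $E^{*}E = \mathbf{1}$ (so that $\Tr(E^{*}P)=\overline{\Tr(EP)}$ genuinely holds, using that $P$ is a real-entried... no, rather Hermitian projector) and that $\Tr(P)=KM$ exactly, which is where the dimensions $K$ and $M$ enter the normalizing constants and make the Cauchy--Schwarz bound tight enough to give $A_d\le B_d$ with no stray factors. I would also remark that the argument for the inner-code enumerators $A_d^{(a,a)}, B_d^{(a,a)}$ is literally the single-quantum-code case already in \cite{Rains1998, Shor1997}, and that the outer-code case follows either by the same direct computation or, less economically, by combining the previous lemma with convexity; the direct computation is cleaner, so that is the route I would present.
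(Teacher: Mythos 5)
Your proposal is correct and follows essentially the same route as the paper: nonnegativity from the trace (Hilbert--Schmidt) inner product, then Cauchy--Schwarz applied to the pair $\Pi$ and $\Pi E\Pi$ (equivalently your $P$ and $PE^{*}P$), with $\Tr(\Pi)$ equal to the rank supplying exactly the normalizing factor that turns the bound into $A_d\le B_d$ and $A_d^{(a,a)}\le B_d^{(a,a)}$. The paper merely phrases the argument once for a generic projector $\Pi$ and then substitutes $\Pi=P$ and $\Pi=P_a$, which is what you do in parallel.
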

\begin{proof}
For every orthogonal projector $\Pi:\mathbb{C}^{q^{n}}\rightarrow\mathbb{C}^{q^{n}}$ of rank $K$, we have
\begin{equation*}
0\leq\frac{1}{K^{2}}\sum\limits_{\substack{E\in \mathcal{E}_n\\ \wt(E)=d}}\Tr\!\left(E\Pi\right)\Tr\!\left(E^{*}\Pi\right)
\end{equation*}
by the non-negativity of the trace inner product. Furthermore, we can write this inequality in the form
\begin{align*}
0 & \leq \frac{1}{K^{2}}\sum\limits_{\substack{E\in \mathcal{E}_n\\ \wt(E)=d}}\Tr\!\left(E\Pi\right)\Tr\!\left(E^{*}\Pi\right) \\
& = \frac{1}{K^{2}}\sum\limits_{\substack{E\in \mathcal{E}_n\\ \wt(E)=d}}\left\vert\Tr\!\left(E\Pi\right)\right\vert^{2} \\
& = \frac{1}{K^{2}}\sum\limits_{\substack{E\in \mathcal{E}_n\\ \wt(E)=d}}\left\vert\Tr\!\left(\left(\Pi E\Pi\right)\Pi\right)\right\vert^{2}.
\end{align*}
Using the Cauchy-Schwarz inequality, we obtain
\begin{align*}
0 & \leq \frac{1}{K^{2}}\sum\limits_{\substack{E\in \mathcal{E}_n\\ \wt(E)=d}}\Tr\!\left(\left(\Pi E\Pi\right)\left(\Pi E\Pi\right)^{*}\right)\Tr\!\left(\Pi^{*}\Pi\right) \\
& = \frac{1}{K}\sum\limits_{\substack{E\in \mathcal{E}_n\\ \wt(E)=d}}\Tr\!\left(E\Pi E^{*}\Pi\right).
\end{align*}
Substituting $\Pi=P$ implies (1) and substituting $\Pi=P_{a}$ implies (2).
\end{proof}

The main utility of weight enumerators for quantum codes is that they allow for a complete characterization of the error-correction capability of the code in terms of the minimum distance of the code. In the following proposition, we prove a similar result for the weight enumerators of hybrid codes.

\begin{proposition}
\label{wtenumnecsuf}
Let $\mathcal{C}$ be a $\left(\!\left(n,K\!:\!M\right)\!\right)_{q}$ hybrid code with weight distributions $A_{d}$ and $B_{d}$. Then $\mathcal{C}$ can detect all errors in $\mathcal{E}_{n}$ of weight $d$ if and only if $A_{d}^{\left(a,a\right)}=B_{d}^{\left(a,a\right)}$ for all $a\in\left[M\right]$ and $B_{d}^{\left(a,b\right)}=0$ for all $a,b\in\left[M\right],a\neq b$.
\end{proposition}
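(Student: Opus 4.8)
The plan is to split the statement ``$\mathcal{C}$ detects all errors of weight $d$'' into its off-diagonal part ($P_{b}EP_{a}=0$ for $a\ne b$) and its diagonal part ($P_{a}EP_{a}=\lambda_{E,a}P_{a}$) exactly as dictated by Equation~(\ref{hkl}), and then to match each part with one of the two claimed conditions on the inner-code enumerators. Both implications will then come essentially for free, because each of $B_{d}^{(a,b)}$ and $B_{d}^{(a,a)}-A_{d}^{(a,a)}$ is a sum of nonnegative terms indexed by the weight-$d$ errors $E$, so vanishing (respectively, equality) of the total is equivalent to the corresponding statement for every individual $E$.

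For the off-diagonal part, I would fix $a\ne b$ and note that, using only $P_{a}=P_{a}^{2}=P_{a}^{*}$, $P_{b}=P_{b}^{2}=P_{b}^{*}$, and cyclicity of the trace,
\[
\Tr(EP_{a}E^{*}P_{b})=\Tr\bigl((P_{b}EP_{a})(P_{b}EP_{a})^{*}\bigr)\ge 0,
\]
i.e.\ it is the squared Frobenius norm of $P_{b}EP_{a}$. Hence $B_{d}^{(a,b)}=\tfrac{1}{K}\sum_{\wt(E)=d}\Tr\bigl((P_{b}EP_{a})(P_{b}EP_{a})^{*}\bigr)$ vanishes if and only if $P_{b}EP_{a}=0$ for every $E$ of weight $d$, which is precisely the $a\ne b$ case of Equation~(\ref{hkl}). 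For the diagonal part, recall that $A^{(a,a)}(z)$ and $B^{(a,a)}(z)$ are the Shor--Laflamme enumerators of the quantum inner code $\mathcal{C}_{a}$ with projector $P_{a}$. In the proof of Lemma~\ref{cauchyschwarz} with $\Pi=P_{a}$, the inequality $A_{d}^{(a,a)}\le B_{d}^{(a,a)}$ is obtained by summing, over the weight-$d$ errors $E$, the Cauchy--Schwarz inequality
\[
\bigl|\Tr((P_{a}EP_{a})P_{a})\bigr|^{2}\le \Tr\bigl((P_{a}EP_{a})(P_{a}EP_{a})^{*}\bigr)\,\Tr(P_{a}^{*}P_{a}).
\]
Since $P_{a}\ne 0$, equality holds here exactly when $P_{a}EP_{a}$ is a scalar multiple $\lambda_{E,a}P_{a}$ of $P_{a}$; and since all the summands are nonnegative, $A_{d}^{(a,a)}=B_{d}^{(a,a)}$ if and only if this holds for every weight-$d$ error $E$, i.e.\ exactly the $a=b$ case of Equation~(\ref{hkl}).

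Combining the two parts, $\mathcal{C}$ detects every weight-$d$ error if and only if Equation~(\ref{hkl}) holds for all $E$ with $\wt(E)=d$, which by the above is equivalent to $A_{d}^{(a,a)}=B_{d}^{(a,a)}$ for all $a$ together with $B_{d}^{(a,b)}=0$ for all $a\ne b$. I do not expect a genuine obstacle here: the proposition is essentially a repackaging of the Shor--Laflamme minimum-distance characterization of quantum codes for the diagonal blocks together with a positive-semidefiniteness argument for the off-diagonal blocks. The only point requiring care is the equality analysis of Cauchy--Schwarz, where one must check that degeneracy forces $P_{a}EP_{a}$ to be proportional to $P_{a}$ specifically (immediate, as $P_{a}$ is the nonzero argument of the inner product) and, keeping track of complex conjugates, that the resulting constant may be identified with the scalar $\lambda_{E,a}$ appearing in Equation~(\ref{hkl}).
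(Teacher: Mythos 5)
Your proposal is correct and follows essentially the same route as the paper: the off-diagonal condition $B_{d}^{(a,b)}=0$ is handled via nonnegativity of $\Tr\!\left(EP_{a}E^{*}P_{b}\right)$ (the Frobenius norm of $P_{b}EP_{a}$), and the diagonal condition $A_{d}^{(a,a)}=B_{d}^{(a,a)}$ via the equality case of the Cauchy--Schwarz inequality from Lemma~\ref{cauchyschwarz} with $\Pi=P_{a}$, exactly as in the paper's converse direction. The only cosmetic difference is that you run both implications at once through these termwise equivalences, whereas the paper proves the forward direction by a direct computation from the basis-vector form of the hybrid Knill--Laflamme conditions in Equation~(\ref{klvec}).
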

\begin{proof}
Recall that an error is detectable by a code if and only if it satisfies the hybrid Knill-Laflamme conditions in Equation (\ref{klvec}), and that a projector onto one of the inner codes $\mathcal{C}_{a}$ may be written as $P_{a}=\sum_{i=1}^{K}\ket{c_{i}^{\left(a\right)}}\bra{c_{i}^{\left(a\right)}}$, where $\left\{\ket{c_{i}^{\left(a\right)}}\mid i\in\left[K\right]\right\}$ is an orthonormal basis for $\mathcal{C}_{a}$. Suppose that all errors of weight $d$ are detectable by $\mathcal{C}$. Then \begin{align*}
A_{d}^{\left(a,a\right)} & = \frac{1}{K^{2}}\sum\limits_{\substack{E\in \mathcal{E}_n\\ \wt(E)=d}} \Tr\!\left(EP_{a}\right)\Tr\!\left(E^{*}P_{a}\right) \\
& = \frac{1}{K^{2}}\sum\limits_{\substack{E\in \mathcal{E}_n\\ \wt(E)=d}}\left\vert\sum_{i=1}^{K}\bra{c_{i}^{\left(a\right)}}E\ket{c_{i}^{\left(a\right)}}\right\vert^{2} \\
& = \sum\limits_{\substack{E\in \mathcal{E}_n\\ \wt(E)=d}}\left\vert\alpha_{E}^{\left(a\right)}\right\vert^{2}.
\end{align*}
Similarly, we have \begin{align*}
B_{d}^{\left(a,a\right)} & = \frac{1}{K}\sum\limits_{\substack{E\in \mathcal{E}_n\\ \wt(E)=d}} \Tr\!\left(EP_{a}E^{*}P_{a}\right) \\
& = \frac{1}{K}\sum\limits_{\substack{E\in \mathcal{E}_n\\ \wt(E)=d}}\sum\limits_{i,j=1}^{K}\left\vert\bra{c_{i}^{\left(a\right)}}E\ket{c_{j}^{\left(a\right)}}\right\vert^{2} \\
& = \frac{1}{K}\sum\limits_{\substack{E\in \mathcal{E}_n\\ \wt(E)=d}}\sum\limits_{i=1}^{K}\left\vert\bra{c_{i}^{\left(a\right)}}E\ket{c_{i}^{\left(a\right)}}\right\vert^{2} \\
& = \sum\limits_{\substack{E\in \mathcal{E}_n\\ \wt(E)=d}}\left\vert\alpha_{E}^{\left(a\right)}\right\vert^{2}.
\end{align*}
Therefore, we have that $A_{d}^{\left(a,a\right)}=B_{d}^{\left(a,a\right)}$. Additionally, if $a\neq b$, then by Equation (\ref{klvec}) we have $\bra{c_{i}^{\left(a\right)}}E\ket{c_{j}^{\left(b\right)}}=0$. Therefore, \begin{align*}
B_{d}^{\left(a,b\right)} & = \frac{1}{K}\sum\limits_{\substack{E\in \mathcal{E}_n\\ \wt(E)=d}}\sum\limits_{i,j=1}^{K}\left\vert\bra{c_{i}^{\left(a\right)}}E\ket{c_{j}^{\left(b\right)}}\right\vert^{2} \\
& = 0.
\end{align*}

Conversely, suppose that (a) $A_{d}^{\left(a,a\right)}=B_{d}^{\left(a,a\right)}$ for all $a\in\left[M\right]$ and (b) $B_{d}^{\left(a,b\right)}=0$ for all $a,b\in\left[M\right],a\neq b$. Condition (a) implies that equality holds for each $E$ in the Cauchy-Schwarz inequality. Therefore, we have that $P_{a}EP_{a}$ and $P_{a}$ must be linearly dependent, so there must be a constant $\alpha_{E}^{\left(a\right)}\in\mathbb{C}$ such that $P_{a}EP_{a}=\alpha_{E}^{\left(a\right)}$, or equivalently, $\bra{c_{i}^{\left(a\right)}}E\ket{c_{j}^{\left(a\right)}}=\alpha_{E}^{\left(a\right)}\delta_{i,j}$, for all errors of weight $d$. Condition (b) implies that $\bra{c_{i}^{\left(a\right)}}E\ket{c_{j}^{\left(b\right)}}=0$ if $a\neq b$, for all errors of weight $d$. Putting these together, we get the hybrid Knill-Laflamme conditions, so all errors of weight $d$ are detectable.
\end{proof}

\subsection{Linear Programming Bounds}\label{lpb}

One of the more useful properties of weight enumerators is that they satisfy the Macwilliams identity \cite{Shor1997}:
\begin{equation}
B^{\left(a,b\right)}\!\left(z\right)=\frac{K}{q^{n}}\left(1+\left(q^{2}-1\right)z\right)^{n}A^{\left(a,b\right)}\!\left(\frac{1-z}{1+\left(q^{2}-1\right)z}\right).
\end{equation}
The MacWilliams identities, along with the results from Lemma \ref{cauchyschwarz} and Proposition \ref{wtenumnecsuf} and the shadow inequalities for qubit codes \cite{Rains1999b} allow us to define linear programming bounds on the parameters of general hybrid codes (see \cite{Ashikhmin1999, Calderbank1998, Rains1998} for linear programming bounds on quantum codes). Let \begin{equation}K_{j}\!\left(r\right)=\sum\limits_{k=0}^{j}\left(-1\right)^{k}\left(q^{2}-1\right)^{j-k}\binom{r}{k}\binom{n-r}{j-k}\end{equation} denote the $q^{2}$-ary Krawtchouk polynomials.

\begin{proposition}
The parameters of an $\left(\!\left(n,K\!:\!M,d\right)\!\right)_{q}$ hybrid code must satisfy the following conditions:
\begin{enumerate}
\item $A_{j}=\frac{1}{M^{2}}\sum\limits_{a,b=1}^{M}A_{j}^{\left(a,b\right)}$
\item $B_{j}=\frac{1}{M}\sum\limits_{a,b=1}^{M}B_{j}^{\left(a,b\right)}$
\item $A_{0}^{\left(a,b\right)}=1$
\item $B_{0}^{\left(a,b\right)}=\begin{cases} 1 & \text{ if } a=b \\ 0 & \text{ if } a\neq b \end{cases}$
\item $A_{j}^{\left(a,a\right)}=B_{j}^{\left(a,a\right)}$, for all $0\leq j<d$
\item $B_{j}^{\left(a,b\right)}=0$, for all $0\leq j<d$, $a\neq b$
\item $0\leq A_{j}^{\left(a,a\right)}\leq B_{j}^{\left(a,a\right)}$, for all $0\leq j\leq n$
\item $0\leq A_{j}\leq B_{j}$, for all $0\leq j\leq n$
\item $0\leq B_{j}^{\left(a,b\right)}$, for all $0\leq j\leq n$
\item $B_{j}^{\left(a,b\right)}=\frac{K}{q^{n}}\sum\limits_{r=0}^{n}K_{j}\!\left(r\right)A_{r}^{\left(a,b\right)}$, for all $0\leq j\leq n$ (MacWilliams Identity)
\item $0\leq\sum\limits_{r=0}^{n}\left(-1\right)^{r}K_{j}\!\left(r\right)A_{r}^{\left(a,b\right)}$, for all $0\leq j\leq n$, for qubit codes (Shadow Inequalities)
\end{enumerate}
\end{proposition}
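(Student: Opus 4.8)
The plan is to verify the eleven conditions one by one, noting that most are direct restatements of results already in hand. Conditions (1) and (2) are exactly the content of the lemma that expresses the outer-code enumerators $A(z),B(z)$ in terms of the inner-code enumerators $A^{(a,b)}(z),B^{(a,b)}(z)$, read coefficient by coefficient. Conditions (7) and (8) are parts (2) and (1) of Lemma~\ref{cauchyschwarz} verbatim. Conditions (5) and (6) are the two halves of Proposition~\ref{wtenumnecsuf}: since $\mathcal{C}$ has minimum distance $d$ it detects every error of weight $j$ with $0\le j<d$, and for each such $j$ that proposition gives $A_j^{(a,a)}=B_j^{(a,a)}$ for all $a$ together with $B_j^{(a,b)}=0$ for all $a\ne b$. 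This leaves conditions (3), (4), (9), (10) and (11).

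Conditions (3) and (4) follow from the fact that the identity is the unique weight-zero element of $\mathcal{E}_n$: $A_0^{(a,b)}=\frac{1}{K^2}\Tr(P_a)\Tr(P_b)=1$ because every $P_m$ has rank $K$, while $B_0^{(a,b)}=\frac{1}{K}\Tr(P_aP_b)$, which is $1$ when $a=b$ (as $P_a^2=P_a$) and $0$ when $a\ne b$ (as $P_aP_b=0$). For condition (9), rewrite each summand of $B_j^{(a,b)}$, using cyclicity of the trace together with the projector identities $P_a^2=P_a=P_a^{*}$ and likewise for $P_b$, as $\Tr(EP_aE^{*}P_b)=\Tr((P_bEP_a)(P_bEP_a)^{*})\ge 0$; summing gives $B_j^{(a,b)}\ge 0$. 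For condition (10), start from the functional MacWilliams identity recalled above, insert $A^{(a,b)}(w)=\sum_r A_r^{(a,b)}w^r$, and use $(1+(q^2-1)z)^n\left(\frac{1-z}{1+(q^2-1)z}\right)^r=(1-z)^r(1+(q^2-1)z)^{n-r}$; extracting the coefficient of $z^j$ and recognizing $K_j(r)$ as the coefficient of $z^j$ in $(1-z)^r(1+(q^2-1)z)^{n-r}$ yields $B_j^{(a,b)}=\frac{K}{q^n}\sum_r K_j(r)A_r^{(a,b)}$.

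Condition (11) is the only one requiring a genuinely new ingredient, namely the shadow inequality of Rains for qubit codes \cite{Rains1999b}. When $a=b$, the pair $(A^{(a,a)}(z),B^{(a,a)}(z))$ is the Shor--Laflamme enumerator pair of the quantum code $\mathcal{C}_a$, so Rains' inequality applies directly. The plan for general $a,b$ is to run the same argument at the level of the bilinear enumerators. Since $q=2$, every element $F$ of the Pauli basis $\mathcal{E}_n$ is Hermitian, so $\Tr(FP_m)$ is real; moreover the operator $Q_b:=Y^{\otimes n}\overline{P_b}\,Y^{\otimes n}$ (the entrywise complex conjugate of $P_b$, conjugated by $Y^{\otimes n}$) is again a rank-$K$ orthogonal projector satisfying $\Tr(FQ_b)=(-1)^{\wt(F)}\Tr(FP_b)$ for every $F\in\mathcal{E}_n$ --- the sign being $(-1)^{\wt(F)}$ because conjugation by $Y^{\otimes n}$ contributes a factor $-1$ at each tensor position carrying $X$ or $Z$ while entrywise conjugation contributes a factor $-1$ at each position carrying $Y$. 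Applying the MacWilliams identity (valid for any pair of rank-$K$ projectors, by the argument just given for condition (10)) to the pair $(P_a,Q_b)$ and comparing Pauli coefficients shows $\frac{1}{K}\sum_{\wt(E)=j}\Tr(EP_aE^{*}Q_b)=\frac{K}{q^n}\sum_r(-1)^rK_j(r)A_r^{(a,b)}$; the left-hand side is nonnegative by the Frobenius-norm computation of condition (9), so $\sum_r(-1)^rK_j(r)A_r^{(a,b)}\ge 0$.

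I expect this last step to be the main obstacle. Identifying the correct auxiliary projector $Q_b$ and verifying that the composite sign it produces is \emph{exactly} $(-1)^{\wt(F)}$ is the delicate point, and it is here that the restriction to qubits is essential --- both because $\Tr(FP_m)$ is then real and because $q^2-1=3$ governs the shadow transform. Everything else reduces either to reindexing sums over $a$, $b$ and the weight, or to quoting a statement already proved.
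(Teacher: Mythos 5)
Your argument is correct and, for conditions (1)--(10), matches the paper's proof essentially step for step: (1)--(2) come from the lemma decomposing $A(z)$ and $B(z)$ into the $A^{(a,b)}(z)$, $B^{(a,b)}(z)$; (3)--(4) from the weight-zero term $E=I$; (5)--(6) from Proposition~\ref{wtenumnecsuf}; (7)--(8) from Lemma~\ref{cauchyschwarz}; (9) from writing $\Tr\!\left(EP_{a}E^{*}P_{b}\right)=\Tr\!\left(\left(P_{b}EP_{a}\right)\left(P_{b}EP_{a}\right)^{*}\right)\geq 0$; and (10) from the Shor--Laflamme MacWilliams identity \cite{Shor1997}, which you merely expand into its Krawtchouk-coefficient form. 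The one place you genuinely diverge is condition (11): the paper disposes of it by citing Rains's shadow inequalities \cite{Rains1999b}, whereas you actually prove it, including the off-diagonal case $a\neq b$ that is not literally the single-code statement of \cite{Rains1999b}, by introducing the auxiliary rank-$K$ projector $Q_{b}=Y^{\otimes n}\overline{P_{b}}\,Y^{\otimes n}$, verifying $\Tr\!\left(FQ_{b}\right)=\left(-1\right)^{\wt\left(F\right)}\Tr\!\left(FP_{b}\right)$, and applying the MacWilliams identity to the pair $\left(P_{a},Q_{b}\right)$ together with the positivity argument from (9). Your sign bookkeeping is right ($YXY=-X$ and $YZY=-Z$ give a sign at each $X$ or $Z$ position, $\overline{Y}=-Y$ at each $Y$ position, and $\Tr\!\left(FP_{b}\right)\in\mathbb{R}$ for qubits), so at this point your write-up is more self-contained than the paper's; the only thing worth making explicit is that the MacWilliams identity you invoke for $\left(P_{a},Q_{b}\right)$ is the general Shor--Laflamme/Rains identity valid for an arbitrary pair of Hermitian (here projector) operators, not just for the specific pair $\left(P_{a},P_{b}\right)$ appearing in the definition of $A^{(a,b)}$ and $B^{(a,b)}$, which is what licenses the substitution.
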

\begin{proof}
Conditions 1) and 2) follow from the definition of $A_{j}$ and $B_{j}$. The constraints 3) and 4) respectively result from substituting $E=I$ into the definition of $A_{0}^{(a,b)}$ and $B_{0}^{(a,b)}$. 

The Knill-Laflamme error-detecting conditions of the hybrid codes shown in Proposition~\ref{wtenumnecsuf} imply the constraints 5) and 6).

The claims 7) and 8) are a consequence of Lemma~\ref{cauchyschwarz}. Essentially, these two conditions follow from the Cauchy-Schwarz inequalities when applied to the quantum and hybrid projectors, respectively. 

The statement 9) is simply a consequence of the non-negativity of all $B_{j}^{\left(a,b\right)}$. Conditions 10) and 11) follow from the MacWilliams identities \cite{Shor1997} and shadow inequalities \cite{Rains1999b} respectively.
\end{proof}

Note that conditions 10) and 11) imply the MacWilliams identity and shadow inequality respectively for the outer code.

If we consider only hybrid stabilizer codes, we have that all of the weight distributions for the inner quantum codes are identical. This along with our error-detecting condition from Proposition \ref{wtenumnecsuf} give us that a stabilizer hybrid stabilizer code can detect all errors of weight $d$ if and only if $A_{d}^{\left(a,a\right)}=B_{d}^{\left(a,a\right)}=B_{d}$. Additionally, straightforward calculations give us the missing piece of the nested code condition, $A_{d}\leq A_{d}^{\left(a,a\right)}$ for all $d$. Thus we recover the linear programming bounds of Grassl et al. when we restrict our bounds to hybrid stabilizer codes, with the exception that we have the additional constraint of the shadow inequality for the outer code. This constraint strengthens the bounds found in Table I of \cite{Grassl2017} and rules out the possibility, for example, of $\left[\!\left[10,4\!:\!1,3\right]\!\right]_{2}$, $\left[\!\left[12,5\!:\!1,3\right]\!\right]_{2}$, and $\left[\!\left[10,2\!:\!1,4\right]\!\right]_{2}$ hybrid stabilizer codes.

Notably missing in our linear programming bounds is part of the nested code condition found in the linear programming bounds for hybrid stabilizer codes, namely that $A_{d}\leq A_{d}^{\left(a,a\right)}$ for all $d$. In fact we can construct a nonadditive hybrid code that violates this condition, as shown in the example below.

\begin{example}
We return to our $\left(\!\left(6,2\!:\!2,1\right)\!\right)_{2}$ nonadditive hybrid code from Example \ref{nonaddex}. The weight distributions for $\mathcal{C}_{a}$, $\mathcal{C}_{b}$, and $\mathcal{C}$ are
\begin{align*}
A^{\left(a,a\right)} & = \left[1,1,0,0,15,15,0\right] \\
A^{\left(b,b\right)} & = \left[1,0,1,0,11,16,3\right] \\
A & = \left[1,\frac{1}{4},\frac{1}{4},0,6,\frac{31}{4},\frac{3}{4}\right],
\end{align*}
where the weight distributions are the coefficients of the weight enumerators. These weight distributions clearly violate the inequality $A_{d}\leq A_{d}^{\left(a,a\right)}$.
\end{example}

Interestingly, we were unable to find any separation between our bounds with and without the nested code condition, suggesting the possibility that any hybrid code that meets these bounds must also satisfy this additional constraint. Since this condition is satisfied by any hybrid code constructed using the CWS framework, it seems that this comparable to the situation with quantum codes, where all known nonadditive codes meeting the linear programming bounds are CWS codes. Our bounds suggest the possibility of several nonadditive hybrid codes, such as $\left(\!\left(10,8\!:\!6,3\right)\!\right)_{2}$ and $\left(\!\left(13,8\!:\!3,3\right)\!\right)_{2}$ codes.

\section{Family of Single Error Detecting Codes}

In \cite{Rains1997}, Rains et al. constructed a $\left(\!\left(5,6,2\right)\!\right)_{2}$ nonadditive quantum code which was later extended to several families of $\left(\!\left(n,q^{n-3}<K<q^{n-2},2\right)\!\right)_{q}$ nonadditive codes with $n$ odd, see \cite{Aggarwal2008, Arvind2002, Feng2008, Rains1999a, Rigby2019, Smolin2007}. Rains \cite{Rains1999a} also showed that for any $\left(\!\left(n,K,2\right)\!\right)_{2}$ quantum code with odd $n$, \begin{equation*}K\leq2^{n-2}\left(1-\frac{1}{n-1}\right).\end{equation*} In particular, this disallows the existence of odd-lengthed $\left(\!\left(n,2^{n-2},2\right)\!\right)_{2}$ quantum codes.

Here we give a construction for a family of single error-detecting hybrid stabilizer codes such that $n$ is odd and $KM=2^{n-2}$, so these codes have the remarkable feature in that they allow one to squeeze in an additional classical bit. The generators of these codes are similar to the generators of the family of even-length stabilizer codes with parameters $\left[\!\left[n,n-2,2\right]\!\right]_{q}$, see \cite{Gottesman1997, Rains1999a}.

\begin{theorem}
For $n$ odd, there exists an $\left[\!\left[n,n-3\!:\!1,2\right]\!\right]_{2}$ genuine hybrid code with generators
\begin{equation*}
\left(\mkern-5mu
\begin{tikzpicture}[baseline=-.5ex]
\matrix[
  matrix of math nodes,
  column sep=.25ex, row sep=-.25ex
] (m)
{
X^{\otimes n-1} & X \\
Z^{\otimes n-1} & I \\
I^{\otimes n-1} & X \\
};
\draw[line width=1pt, line cap=round, dash pattern=on 0pt off 2\pgflinewidth]
  ([yshift=.2ex] m-2-1.south west) -- ([yshift=.2ex] m-2-2.south east);
\end{tikzpicture}\mkern-5mu
\right)
\end{equation*}
\end{theorem}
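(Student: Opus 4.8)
The plan is to treat the displayed operators as a hybrid stabilizer code in the sense of Section~II, read off its parameters, verify single-error detection directly, and then eliminate each trivial construction.

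Following the convention of (\ref{gen7}), the generators above the dashed line form the quantum stabilizer $\mathcal{S}_{\mathcal{Q}}=\langle X^{\otimes n},\,Z^{\otimes n-1}\otimes I\rangle$ of the outer code, and the one below it is the classical stabilizer $\mathcal{S}_{\mathcal{C}}=\langle I^{\otimes n-1}\otimes X\rangle$. First I would check that all three generators commute. The only nontrivial pair is $X^{\otimes n}$ and $Z^{\otimes n-1}\otimes I$: they overlap on the first $n-1$ coordinates with pairwise anticommuting factors, so they commute exactly when $n-1$ is even, i.e.\ when $n$ is odd — the sole place the hypothesis is used. The two generators of $\mathcal{S}_{\mathcal{Q}}$ are symplectically independent, so the outer code has dimension $2^{n-2}$; adjoining $I^{\otimes n-1}\otimes X$, which is independent of them, gives inner codes of dimension $2^{n-3}$, so $K=2^{n-3}$ and $M=2$, matching $\left[\!\left[n,n-3\!:\!1,2\right]\!\right]_{2}$. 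By Proposition~\ref{hybgenconstr} the two inner codes $\mathcal{C}_{0},\mathcal{C}_{1}$ are the stabilizer codes of $\langle\mathcal{S}_{\mathcal{Q}},\pm(I^{\otimes n-1}\otimes X)\rangle$, which are orthogonal because they disagree on the eigenvalue of $I^{\otimes n-1}\otimes X$.

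To establish the distance I would show every weight-one Pauli $E$ is detectable, using the stabilizer form of the conditions from Section~\ref{seced}: writing $\mathcal{S}_{a}=\langle\mathcal{S}_{\mathcal{Q}},(-1)^{a}(I^{\otimes n-1}\otimes X)\rangle$, it suffices that for each $a$ either $E$ anticommutes with some element of $\mathcal{S}_{a}$ or $E\in\mathcal{S}_{a}$ up to phase (the diagonal condition $P_aEP_a=\lambda P_a$), and that $E$ anticommutes with some element of $\mathcal{S}_{\mathcal{Q}}$ or commutes with $I^{\otimes n-1}\otimes X$ (the off-diagonal condition $P_bEP_a=0$ for $a\neq b$). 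A short case analysis on the support of $E$ then finishes it: if $E$ is supported on a qubit $j\le n-1$, it anticommutes with $Z^{\otimes n-1}\otimes I$ when it is $X$-type and with $X^{\otimes n}$ when it is $Y$- or $Z$-type, so both conditions hold; if $E=I^{\otimes n-1}\otimes X$ then $E\in\mathcal{S}_{0}$ and $-E\in\mathcal{S}_{1}$, so $E$ acts as a global sign on each inner code and commutes with $I^{\otimes n-1}\otimes X$; and if $E=I^{\otimes n-1}\otimes Z$ or $I^{\otimes n-1}\otimes Y$ — the only weight-one operators anticommuting with the classical generator, hence the only ones that could carry $\mathcal{C}_{a}$ into $\mathcal{C}_{b}$ with $a\neq b$ — then $E$ anticommutes with $X^{\otimes n}\in\mathcal{S}_{\mathcal{Q}}$, which restores the off-diagonal condition. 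Hence $\mathcal{C}$ detects every weight-one error, which is the claimed minimum distance.

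For genuineness I would rule out all three constructions of Proposition~\ref{trivcon} using $KM=2^{n-2}$. Construction~(1) would require an $\left(\!\left(n,2^{n-2},2\right)\!\right)_{2}$ quantum code, and a single use of construction~(2) (the only possibility, since the number of classical bits cannot be negative) would require an $\left[\!\left[n,n-2,2\right]\!\right]_{2}$ quantum code; both are impossible because Rains' bound quoted above gives $K\le 2^{n-2}\bigl(1-\frac{1}{n-1}\bigr)<2^{n-2}$ for odd $n$. Construction~(3) would require a quantum $\left[\!\left[n_{1},n-3,2\right]\!\right]_{2}$ code and a classical $\left[n_{2},1,2\right]_{2}$ code with $n_{1}+n_{2}=n$; the classical code forces $n_{2}\ge2$, hence $n_{1}\le n-2$, while the quantum Singleton bound forces $n-3\le n_{1}-2$, i.e.\ $n_{1}\ge n-1$, a contradiction. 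Thus no trivial construction yields these parameters, so the code is genuine. I expect the distance verification to be the main obstacle, and within it the off-diagonal ($a\neq b$) detectability condition, which has no analogue for ordinary stabilizer codes and is exactly where $I^{\otimes n-1}\otimes Z$ and $I^{\otimes n-1}\otimes Y$ would destroy single-error detection were it not for their overlap with $X^{\otimes n}$.
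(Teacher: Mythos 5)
Your proposal is correct, but its key step—the distance verification—takes a genuinely different route from the paper's. The paper constructs the inner codes explicitly as spans of the separable codewords $\frac{1}{2}\left(\ket{x}\pm\ket{\overline{x}}\right)\left(\ket{0}\pm\ket{1}\right)$, observes that each inner code alone has distance 2, and then kills the cross terms $\bra{c_{i}^{(0)}}E\ket{c_{j}^{(1)}}$ by a direct inner-product computation that exploits the tensor-product split between the first $n-1$ qubits and the last qubit. You instead stay entirely inside the stabilizer formalism: you check that the generators commute (the one place $n$ odd enters—something the paper uses only implicitly), identify the inner codes as the $\pm1$ eigenspaces of $I^{\otimes n-1}\otimes X$ via Proposition~\ref{hybgenconstr}, and dispose of every weight-one Pauli by commutation: anticommuting with an element of $\mathcal{S}_{\mathcal{Q}}$ forces both $P_{a}EP_{a}=0$ and $P_{b}EP_{a}=0$; $I^{\otimes n-1}\otimes X$ lies (up to sign) in both inner stabilizers; and the dangerous cases $I^{\otimes n-1}\otimes Z$, $I^{\otimes n-1}\otimes Y$ anticommute with $X^{\otimes n}$. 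Your route is shorter, coordinate-free, and makes the degeneracy of the inner codes visible, whereas the paper's computation gives a concrete description of the code states. The genuineness argument is essentially the paper's: Rains' bound excludes an $\left(\!\left(n,2^{n-2},2\right)\!\right)_{2}$ code (your explicit handling of construction (2) of Proposition~\ref{trivcon}, which the paper leaves implicit, reduces to the same bound), and your Singleton-bound contradiction for construction (3) is the paper's argument phrased as $n_{1}\geq n-1$ versus $n_{2}\geq 2$ rather than $n_{c}\leq 1$ forcing a nonexistent $\left[1,1,2\right]_{2}$ code.
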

\begin{proof}

Recall that a number is said to have \emph{even parity} if it has an even number of 1's in its binary expansion. Let $J\subseteq\mathbb{F}_{2}^{n-1}$ be the set of even integers with even parity. We define two codes $\mathcal{C}_{0}$ and $\mathcal{C}_{1}$ as follows: $$\mathcal{C}_{0}=\left\{\frac{1}{2}\left(\ket{x}+\ket{\overline{x}}\right)\left(\ket{0}+\ket{1}\right)\middle| x\in J\right\},$$ $$\mathcal{C}_{1}=\left\{\frac{1}{2}\left(\ket{x}-\ket{\overline{x}}\right)\left(\ket{0}-\ket{1}\right)\middle| x\in J\right\}.$$ It is clear that the stabilizer of $\mathcal{C}_{0}$ is $\left\langle X^{\otimes n}, Z^{\otimes n-1}I, I^{\otimes n-1}X\right\rangle$ and that the stabilizer of $\mathcal{C}_{1}$ is $\left\langle X^{\otimes n}, Z^{\otimes n-1}I, -I^{\otimes n-1}X\right\rangle$. To show that our hybrid code has minimum distance 2, we note first that both $\mathcal{C}_{0}$ and $\mathcal{C}_{1}$ have minimum distance 2 when viewed as separate quantum codes. Thus we only need to look at how single-qubit Pauli errors affect the classical information. Consider two codewords $\ket{c_{i}^{\left(0\right)}}$ and $\ket{c_{j}^{\left(1\right)}}$, one from each quantum code. If $i\neq j$, it is clear that $\bra{c_{i}^{\left(0\right)}}E\ket{c_{j}^{\left(1\right)}}=0$ for any single-qubit Pauli error, since they will be linear combinations of disjoint sets of orthonormal basis vectors. Therefore, we can consider only the case when $i=j$.

Suppose that a single-qubit error occurs on the first $n-1$ qubits, that is $E=I_{2}^{\otimes\ell}\otimes E'\otimes I_{2}^{\otimes n-\ell-2}\otimes I_{2}$, for $\ell\in\left[n-1\right]$. Then since each of our codewords is separable between the first $n-1$ qubits and the last qubit, we can write
\begin{align*}
\bra{c_{i}^{\left(0\right)}}E\ket{c_{i}^{\left(1\right)}} & = \frac{1}{4}\left(\left(\bra{x}+\bra{\overline{x}}\right)\left(\bra{0}+\bra{1}\right)\right)E\left(\left(\ket{x}-\ket{\overline{x}}\right)\left(\ket{0}-\ket{1}\right)\right) \\
& = \frac{1}{4}\left(\left(\bra{x}+\bra{\overline{x}}\right)E'\left(\ket{x}-\ket{\overline{x}}\right)\right)\cdot\left(\left(\bra{0}+\bra{1}\right)\left(\ket{0}-\ket{1}\right)\right) \\
& = \frac{1}{4}\left(\left(\bra{x}+\bra{\overline{x}}\right)E'\left(\ket{x}-\ket{\overline{x}}\right)\right)\cdot 0 \\
& = 0.
\end{align*}
Similarly, if a single-qubit error occurs on the last qubit, that is $E=I_{2}^{\otimes n-1}\otimes E'$, we have
\begin{align*}
\bra{c_{i}^{\left(0\right)}}E\ket{c_{i}^{\left(1\right)}} & = \frac{1}{4}\left(\left(\bra{x}+\bra{\overline{x}}\right)\left(\bra{0}+\bra{1}\right)\right)E\left(\left(\ket{x}-\ket{\overline{x}}\right)\left(\ket{0}-\ket{1}\right)\right) \\
& = \frac{1}{4}\left(\left(\bra{x}+\bra{\overline{x}}\right)\left(\ket{x}-\ket{\overline{x}}\right)\right)\cdot\left(\left(\bra{0}+\bra{1}\right)E'\left(\ket{0}-\ket{1}\right)\right) \\
& = 0\cdot\left(\left(\bra{0}+\bra{1}\right)E'\left(\ket{0}-\ket{1}\right)\right) \\
& = 0.
\end{align*}
Thus the hybrid code given by $\mathcal{C}_{0}\oplus\mathcal{C}_{1}$ has minimum distance 2.

By a result of Rains \cite[Theorem 2]{Rains1999a}, for a general $\left(\!\left(n,K,2\right)\!\right)_{2}$ quantum code with $n$ odd, we have $$K\leq2^{n-2}\left(1-\frac{1}{n-1}\right).$$ In particular, this precludes the possibility of an $\left(\!\left(n,2^{n-2},2\right)\!\right)_{2}$ code for $n$ odd. Similarly, suppose that we could construct a code in our family using an $\left[\!\left[n_{q},k,d\right]\!\right]_{2}$ quantum code and an $\left[n_{c},m,d\right]_{q}$ classical code. Then we would have $n_{q}+n_{c}=n$, $k=n-3$, and $m=1$, and in particular, we have an $\left[\!\left[n_{q},n_{q}+n_{c}-3,2\right]\!\right]_{2}$ quantum code. By the quantum Singleton bound, we must have $n_{c}\leq 1$, forcing us to have a $\left[1,1,2\right]_{2}$ classical code, which of course does not exist. It follows that all of the codes in our family must be genuine hybrid codes.
\end{proof}

An interesting question is whether or not this family of hybrid codes are optimal, by which we mean do there exist odd-length $\left(\!\left(n,2^{n-3}\!:\!M,2\right)\!\right)_{2}$ codes with $M>2$? For small lengths ($n\leq19$) this family achieves the linear programming bounds for general hybrid codes given in Section \ref{lpb}, and we suspect that $M=2$ is optimal for all odd $n$.

\section{Families of Hybrid Codes from\\ Stabilizer Pasting}

In this section, we construct two families of single-error correcting hybrid codes that can encode one or two classical bits. An infinite family of nonadditive quantum codes was constructed by Yu et al. \cite{Yu2015} by pasting together (see \cite{Gottesman1996a}) the stabilizers of Gottesman's $\left[\!\!\!\:\left[2^{j},2^{j}-j-2,3\right]\!\!\!\:\right]_{\!\!\:2}$ codes \cite{Gottesman1996b} with the non-Pauli observables of the $\left(\!\left(9,12,3\right)\!\right)_{2}$ and $\left(\!\left(10,24,3\right)\!\right)_{2}$ nonadditive CWS codes \cite{Yu2007, Yu2008} which function in the same role as the Pauli stabilizers in stabilizer codes.

Below we give the generators of the hybrid codes originally given by Grassl et al. \cite{Grassl2017} that we will use in the construction of our families. The generators for the $\left[\!\left[7,1\!:\!1,3\right]\!\right]_{2}$ code was previously given in (\ref{gen7}), while those for the $\left[\!\left[9,2\!:\!2,3\right]\!\right]_{2}$, $\left[\!\left[10,3\!:\!2,3\right]\!\right]_{2}$, and $\left[\!\left[11,4\!:\!2,3\right]\!\right]_{2}$ hybrid stabilizer codes are (\ref{gen9}), (\ref{gen10}), and (\ref{gen11}) respectively:

\begin{equation}
\label{gen9}
\left(\mkern-5mu
\begin{tikzpicture}[baseline=-.65ex]
\matrix[
  matrix of math nodes,
  column sep=.25ex, row sep=-.25ex
] (m)
{
X & I & I & Z & Y & Z & X & X & Y \\
Z & X & I & Z & Y & X & Y & I & Z \\
I & Z & X & Z & Z & I & X & I & X \\
I & Z & Z & I & Y & X & X & Y & I \\
Z & Z & I & X & X & I & X & Z & I \\
Z & I & I & I & I & X & I & I & I \\
I & Z & I & I & I & I & X & I & I \\ 
};
\draw[line width=1pt, line cap=round, dash pattern=on 0pt off 2\pgflinewidth]
  ([yshift=.2ex] m-5-1.south west) -- ([yshift=.2ex] m-5-9.south east);
\end{tikzpicture}\mkern-5mu
\right)
\end{equation}

\begin{equation}
\label{gen10}
\left(\mkern-5mu
\begin{tikzpicture}[baseline=-.65ex]
\matrix[
  matrix of math nodes,
  column sep=.25ex, row sep=-.25ex
] (m)
{
X & X & I & Z & I & Z & Y & Z & Y & Z \\
X & I & Y & X & I & X & Z & X & X & Y \\
X & Z & X & Y & Z & Y & Y & I & I & Y \\
I & I & Z & Z & X & X & Y & Y & I & I \\
Z & I & I & I & Z & Z & X & X & I & X \\
Z & I & I & I & I & I & I & I & I & X \\
I & I & Z & Z & I & I & I & I & I & I \\ 
};
\draw[line width=1pt, line cap=round, dash pattern=on 0pt off 2\pgflinewidth]
  ([yshift=.2ex] m-5-1.south west) -- ([yshift=.2ex] m-5-10.south east);
\end{tikzpicture}\mkern-5mu
\right)
\end{equation}

\begin{equation}
\label{gen11}
\left(\mkern-5mu
\begin{tikzpicture}[baseline=-.65ex]
\matrix[
  matrix of math nodes,
  column sep=.25ex, row sep=-.25ex
] (m)
{
I & Z & X & I & X & Z & I & Z & X & X & X \\
I & Z & Z & X & I & I & Z & X & X & Y & Y \\
Z & I & I & Z & X & X & Z & X & X & X & I \\
X & X & I & X & Y & X & I & Y & Y & Y & X \\
Y & Y & I & X & X & Y & Y & Z & Y & I & Y \\
Z & I & I & I & I & I & I & I & X & I & I \\
I & Z & I & I & I & I & I & I & X & I & I \\ 
};
\draw[line width=1pt, line cap=round, dash pattern=on 0pt off 2\pgflinewidth]
  ([yshift=.2ex] m-5-1.south west) -- ([yshift=.2ex] m-5-11.south east);
\end{tikzpicture}\mkern-5mu
\right)
\end{equation}

Note that in each case, the generators above the dotted line define a pure $\left[\!\left[n,n-5,2\right]\!\right]_{2}$ quantum code.

The next theorem describes families of hybrid quantum codes. Notice that $2^{2m+5} \equiv 2^5 \pmod{3}$, so the length $n$ given in the theorem is well-defined.

\begin{theorem}  
Let $m$ be a nonnegative integer and $n$ a positive integer given by
$$n=\frac{2^{2m+5}-32}{3}+a,$$
where the parameter $a$ is a small positive integer that is specified below. Then there exists 
\begin{compactenum}[(a)]
\item an $\left[\!\left[n,n-2m-6\!:\!1,3\right]\!\right]_{2}$ hybrid code for $a=7$ and 
\item an $\left[\!\left[n,n-2m-7\!:\!2,3\right]\!\right]_{2}$ hybrid code for $a=9,10,11$.
\end{compactenum}
\end{theorem}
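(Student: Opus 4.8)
The plan is to construct these hybrid codes by \emph{stabilizer pasting}, mimicking the Yu--Chen--Oh construction of nonadditive quantum codes but starting from the small hybrid codes of Grassl et al.\ rather than from nonadditive CWS codes. The base case of the recursion is provided by the codes in \eqref{gen7}, \eqref{gen9}, \eqref{gen10}, \eqref{gen11}: for $m=0$ we have $n = (32-32)/3 + a = a$, so $a=7$ gives the $\left[\!\left[7,1\!:\!1,3\right]\!\right]_2$ code and $a=9,10,11$ give the $\left[\!\left[9,2\!:\!2,3\right]\!\right]_2$, $\left[\!\left[10,3\!:\!2,3\right]\!\right]_2$, $\left[\!\left[11,4\!:\!2,3\right]\!\right]_2$ codes, each of whose quantum part (the generators above the dotted line) is a pure $\left[\!\left[n,n-5,2\right]\!\right]_2$ code. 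The inductive step passes from length $n$ to length $4n+3$ (checking that $4\left(\frac{2^{2m+5}-32}{3}+a\right)+3 = \frac{2^{2m+7}-32}{3}+a$, which holds exactly because $4a+3 = a + 3a+3$ and the offset bookkeeping works out to shift $m \mapsto m+1$), using the $\left[\!\left[2^{j},2^{j}-j-2,3\right]\!\right]_2$ Gottesman codes as the ``glue'': given a hybrid code of length $n$ whose quantum stabilizer contains a pure $\left[\!\left[n,n-5,2\right]\!\right]_2$ subcode, one pastes three shifted copies of its stabilizer generators together with a Gottesman stabilizer on the overlapping index set, exactly as in the pasting construction for quantum codes.

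I would carry out the argument in the following order. First, record the base case explicitly by exhibiting (or citing from \cite{Grassl2017}) the four codes in \eqref{gen7}--\eqref{gen11} and noting the stated fact that in each the first $n-5$ generators define a pure $\left[\!\left[n,n-5,2\right]\!\right]_2$ quantum code; this purity is what makes the pasting work. Second, set up the pasting map abstractly: describe how, from a hybrid stabilizer code $\mathcal{C}$ of length $n$ with quantum stabilizer $\mathcal{S}_{\mathcal{Q}}$, classical stabilizer $\mathcal{S}_{\mathcal{C}}$, and an auxiliary pure distance-$2$ stabilizer code structure, one builds a length-$(4n+3)$ code by combining four ``blocks'' and a Gottesman stabilizer, so that the new quantum stabilizer again contains a pure $\left[\!\left[4n+3, 4n-2, 2\right]\!\right]_2$ subcode and the classical stabilizer is carried along unchanged (preserving $m$ classical bits, hence $M=2^m$). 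Third, verify the parameters: a dimension count on the number of new stabilizer generators versus the physical length gives $k = n' - 2(m+1) - 6$ in case (a) and $k = n' - 2(m+1) - 7$ in case (b) after the substitution $n' = 4n+3$, matching the theorem statement, and confirm $n = \frac{2^{2m+5}-32}{3}+a$ by induction. Fourth, prove distance $3$: show every weight-$1$ and weight-$2$ Pauli error $E$ is detected, i.e.\ satisfies the hybrid Knill--Laflamme conditions \eqref{klvec}. Here one argues blockwise --- an error supported on a single block is detected because that block carries (a shifted copy of) the distance-$3$ hybrid structure, while an error spread across two blocks is detected by the pure distance-$2$ quantum structure together with the Gottesman glue, so that either the error anticommutes with some stabilizer generator (signalling $\epsilon$) or acts as a scalar; crucially the classical-message-distinguishing part of \eqref{klvec} follows because the $\mathcal{S}_{\mathcal{C}}$ generators are untouched by the pasting.

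The main obstacle, I expect, is the distance-$3$ verification in the inductive step, specifically ensuring that a weight-$\le 2$ error that straddles the boundaries between the four pasted blocks and the glue register cannot simultaneously commute with all quantum stabilizer generators \emph{and} fail to act as the same scalar on every inner code. This is exactly the subtle point in the Yu et al.\ pasting argument, and in the hybrid setting there is the extra burden of tracking the classical stabilizer: one must check that such a straddling error never maps one inner code $\mathcal{C}_a$ nontrivially into another $\mathcal{C}_b$ with $a\ne b$, which amounts to showing the error's syndrome with respect to the $\mathcal{S}_{\mathcal{C}}$ generators is trivial whenever its syndrome with respect to $\mathcal{S}_{\mathcal{Q}}$ is. I would handle this by reducing, via the block structure, to the already-established distance properties of the base codes and the known distance-$3$ property of pasted Gottesman codes, invoking purity of the $\left[\!\left[n,n-5,2\right]\!\right]_2$ subcode to control the cross-block terms; the genuineness of the resulting codes would then follow, as in the previous theorem, by comparing against the quantum Singleton bound and the nonexistence of the degenerate splittings that the trivial constructions of Proposition~\ref{trivcon} would require.
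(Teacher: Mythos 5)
There is a genuine gap at the heart of your inductive step: the length recursion $n\mapsto 4n+3$ is arithmetically false. If $n=\frac{2^{2m+5}-32}{3}+a$, then $4n+3=\frac{2^{2m+7}-128}{3}+4a+3$, which equals $\frac{2^{2m+7}-32}{3}+a$ only when $3a=29$; concretely, for $a=7$ the lengths in the family are $7,39,167,\dots$, while your recursion produces $7,31,127,\dots$. So even granting the pasting step, the codes you build do not have the lengths in the statement, and the dimension bookkeeping $k=n'-2(m+1)-6$ collapses with it. The correct passage from $m$ to $m+1$ is not to quadruple the code but to adjoin one new disjoint block of $2^{2m+5}$ fresh qubits: the paper partitions the $n$ qubits as $U_{m}\cup\cdots\cup U_{1}\cup V_{a}$ with $\left\lvert U_{k}\right\rvert=2^{2k+3}$ (note $\sum_{k=1}^{m}2^{2k+3}=(2^{2m+5}-32)/3$), places on each $U_{k}$ the stabilizer of Gottesman's $\left[\!\left[2^{2k+3},2^{2k+3}-2k-5,3\right]\!\right]_{2}$ code, and pastes these in a staircase with the length-$a$ hybrid code on $V_{a}$ as in (\ref{stabpastgen}); this yields $2m+5$ quantum generators plus the unchanged classical generators, giving the stated $k$ and $m$. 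Your ``three shifted copies of the stabilizer glued on an overlapping index set'' has no analogue in this construction, and you supply no mechanism by which it would produce a stabilizer code at all.

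Relatedly, the distance-3 verification you defer is the actual content of the proof, and it is not an induction: the staircase forces a single-qubit error on $U_{k}$ to have a syndrome beginning with $2(m-k)$ zeros followed by a nonzero pair, so the blocks are distinguished, and within $U_{k}$ purity of the Gottesman code separates the errors; errors on $V_{a}$ have syndromes beginning with $2m$ zeros and are handled by the small hybrid code, whose only degeneracies are the weight-2 elements of $\mathcal{S}_{\mathcal{C}}$. The weight-1 errors sharing a syndrome with such an element are corrected only up to a phase that can differ between inner codes --- this is precisely why the result is a distance-3 hybrid code rather than an $\left[\!\left[n,n-2m-5,3\right]\!\right]_{2}$ quantum code, and it is the point your sketch never reaches; ``the $\mathcal{S}_{\mathcal{C}}$ generators are untouched'' does not by itself give the cross-code condition $P_{b}EP_{a}=0$. (Minor: the theorem does not assert genuineness; that is the following proposition, proved via Yu et al.'s characterization of lengths admitting distance-3 stabilizer codes, not the Singleton bound.)
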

\begin{proof}
Roughly speaking, we construct our code by partitioning the first $\left(2^{2m+5}-32\right)\!/3$ qubits into disjoints sets, forming a perfect code on each partition, and use one of the four small hybrid codes on the remaining last $a$ qubits. These codes are then ``glued" to one another by using stabilizer pasting. Other than a small number of degenerate errors introduced by the small hybrid code that must be handled individually, each single-qubit Pauli error has a unique syndrome, allowing for the correction of any single-qubit error.

We will now describe the code construction in more detail. We 
take the $n=\left(2^{2m+5}-32\right)\!/3+a$ qubits and partition them into disjoint sets 
$$U_{m}\cup U_{m-1}\cup\cdots\cup U_{1}\cup V_{a},$$ 
where $\left\lvert U_{k}\right\rvert=2^{2k+3}$ and $\left\lvert V_{a}\right\rvert=a$.
The set $U_{m}$ contains the first $2^{2m+3}$ qubits, $U_{m-1}$ the next $2^{2m+1}$ qubits, and so forth. The final $a$ qubits are contained in $V_a$. 

Let $k$ be an integer in the range $1\le k\le m$. On the qubits in the set $U_{k}$, we can construct a stabilizer code of length $2^{2k+3}$ with $2k+5$ stabilizer generators, following Gottesmann~\cite{Gottesman1996b}. The $2k+5$ stabilizer generators are given as follows. Two of these generators are the tensor product of only Pauli-$X$ and $Z$ operators, which we call $X_{U_{k}}$ and $Z_{U_{k}}$ respectively. We define the other $2k+3$ stabilizers by
\begin{equation*}
\mathcal{S}_{j}^{k}=X^{h_{j}}Z^{h_{j-1}+h_{1}+h_{2k+3}},
\end{equation*}
for $j\in\left[2k+3\right]$. Here we let $h_{j}$ be the $j$-th row of the $\left(2k+3\right)\times2^{2k+3}$ matrix $H_{k}$, whose $i$-th column is the binary representation of $i$, $h_{0}$ is defined to be the all-zero vector, and $X^{h_{j}}=X^{h_{j,0}}X^{h_{j,1}}\dots X^{h_{j,2^{2k+3}-1}}$, with $Z^{h_{j}}$ defined similarly.

For the set $V_{a}$, let $H_{j}^{\mathcal{Q}}$ be the generators of the quantum stabilizer $\mathcal{S}_{\mathcal{Q}}$ of the length $a$ hybrid code defined by the generators in (\ref{gen7}), (\ref{gen9}), (\ref{gen10}), or (\ref{gen11}), and $H_{j}^{\mathcal{C}}$ be the generators of the classical stabilizer $\mathcal{S}_{\mathcal{C}}$ (since the length 7 hybrid code only has one generator in $\mathcal{S}_{\mathcal{C}}$, we can remove $H_{2}^{\mathcal{C}}$). The stabilizer can be pasted together as shown in (\ref{stabpastgen}), where suitable identity operators should be inserted in the blank spaces:

\begin{equation}
\label{stabpastgen}
\left(\mkern-5mu
\begin{tikzpicture}[baseline=-.65ex]
\matrix[
  matrix of math nodes,
  column sep=.25ex, row sep=-.25ex
] (m)
{
X_{U_{m}} & & & & & \\
Z_{U_{m}} & & & & & \\
S_{1}^{m} & X_{U_{m-1}} & & & & \\
S_{2}^{m} & Z_{U_{m-1}} & & & & \\
\vdots & \vdots & \ddots & & & \\
S_{2m-6}^{m} & S_{2m-8}^{m-1} & \cdots & & & \\
S_{2m-5}^{m} & S_{2m-7}^{m-1} & \cdots & X_{U_{2}} & & \\
S_{2m-4}^{m} & S_{2m-6}^{m-1} & \cdots & Z_{U_{2}} & & \\
S_{2m-3}^{m} & S_{2m-5}^{m-1} & \cdots & S_{1}^{2} & X_{U_{1}} & \\
S_{2m-2}^{m} & S_{2m-4}^{m-1} & \cdots & S_{2}^{2} & Z_{U_{1}} & \\
S_{2m-1}^{m} & S_{2m-3}^{m-1} & \cdots & S_{3}^{2} & S_{1}^{1} & H_{1}^{\mathcal{Q}} \\
S_{2m}^{m} & S_{2m-2}^{m-1} & \cdots & S_{4}^{2} & S_{2}^{1} & H_{2}^{\mathcal{Q}} \\
S_{2m+1}^{m} & S_{2m-1}^{m-1} & \cdots & S_{5}^{2} & S_{3}^{1} & H_{3}^{\mathcal{Q}} \\
S_{2m+2}^{m} & S_{2m}^{m-1} & \cdots & S_{6}^{2} & S_{4}^{1} & H_{4}^{\mathcal{Q}} \\
S_{2m+3}^{m} & S_{2m+1}^{m-1} & \cdots & S_{7}^{2} & S_{5}^{1} & H_{5}^{\mathcal{Q}} \\
 & & & & & H_{1}^{\mathcal{C}} \\
 & & & & & H_{2}^{\mathcal{C}} \\
};
\draw[line width=1pt, line cap=round, dash pattern=on 0pt off 2\pgflinewidth]
  ([yshift=.2ex] m-15-1.south west) -- ([yshift=.2ex] m-15-6.south east);
\end{tikzpicture}\mkern-5mu
\right)
\end{equation}

Suppose that we have an single-qubit Pauli error on the block $U_{m}$. Since the code is pure, the syndrome of each error will be distinct and such that the Pauli-$X$, $Y$, and $Z$ sydromes will start with $01$, $11$, and $10$ respectively. However, this leaves all of the syndromes starting with $00$ unused, so Pauli-$X$, $Y$, and $Z$ errors on the block $U_{m-1}$ will have distinct syndromes starting with $0001$, $0011$, and $0010$ respectively. Continuing on, any single-qubit Pauli error occurring on the block $U_{k}$ will have a distinct syndrome starting with $2\left(m-k\right)$ $0$s.

All of the syndromes of errors occurring on the block $V_{a}$ start with $2m$ $0$s. Here our code is not pure, but it is almost pure, with the only degenerate errors being the weight 2 errors in $\mathcal{S}_{\mathcal{C}}$. For example, when $V_{a}$ has 11 qubits, it will have three weight 1 degenerate errors: $Z_{1}$ (a Pauli-$Z$ on the first qubit of the block), $Z_{2}$, and $X_{9}$, each with the syndrome $00011$ (preceeded by $2m$ zeros). If we measure this syndrome, we apply the operator $ZZIIIIIIXII$ to the state, which maps the original codeword to itself up to a global phase. Note, however, that while this global phase is the same for codewords of the same inner code for a given error, it may differ for codewords from different inner codes. In fact, this is exactly what prevents the outer code from being a distance 3 quantum code rather than a distance 3 hybrid code. The argument for when $V_{a}$ has 7, 9, and 10 qubits is similar.

Since we know how to correct any single-qubit Pauli error based on its syndrome, each of the codes must have minimum distance 3.
\end{proof}

Here we show that these hybrid codes are better than optimal quantum stabilizer codes using a result of Yu et al. \cite{Yu2013}.

\begin{proposition}
Let $m$ be a nonnegative integer and $n$ a positive integer given by
$$n=\frac{2^{2m+5}-32}{3}+a,$$
where $a\in\left\{7,9,10,11\right\}$. Then there does not exist an $\left[\!\left[n,n-2m-5,3\right]\!\right]_{2}$ stabilizer code.
\end{proposition}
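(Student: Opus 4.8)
The plan is to show that an $\left[\!\left[n,n-2m-5,3\right]\!\right]_{2}$ stabilizer code would be too good, and to derive a contradiction using the known bound of Yu et al. \cite{Yu2013} together with the existence of our hybrid family. First I would recall that the hybrid code constructed in the preceding theorem gives, in particular, an $\left[\!\left[n,n-2m-7\!:\!2,3\right]\!\right]_{2}$ hybrid stabilizer code for $a\in\{9,10,11\}$ and an $\left[\!\left[n,n-2m-6\!:\!1,3\right]\!\right]_{2}$ hybrid stabilizer code for $a=7$. By applying the second trivial construction of Proposition \ref{trivcon} repeatedly (converting classical bits to nothing is not allowed, but converting quantum qubits to classical bits is), or more directly by using the hybrid code's outer code as a quantum code, one sees that the outer code $\mathcal{C}=\mathcal{C}_{0}\oplus\cdots$ is a pure $\left[\!\left[n,n-2m-5,2\right]\!\right]_{2}$ quantum stabilizer code but is \emph{not} a distance 3 quantum code — this is precisely the content of the degenerate-error analysis at the end of the previous proof, where the phase picked up by the correction operator $H^{\mathcal{C}}$-type elements differs between inner codes.

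The key step is to invoke the bound of Yu, Chen, and Oh \cite{Yu2013} on the dimension of distance-3 qubit stabilizer codes. Their result states (in the form we need) that any $\left[\!\left[n,k,3\right]\!\right]_{2}$ stabilizer code must satisfy a bound that is strictly stronger than the quantum Hamming bound $2^{k}(1+3n)\le 2^{n}$ for most lengths $n$ not of the form $2^{j}-1$; in particular it rules out stabilizer codes that would saturate the Hamming bound except at the lengths where a perfect code exists. I would then check that $n=(2^{2m+5}-32)/3+a$ is not of the form $2^{j}-1$ for $a\in\{7,9,10,11\}$ (a short modular-arithmetic verification), and that $k=n-2m-5$ together with this $n$ would force the putative code to meet or exceed the Yu-Chen-Oh bound, yielding the contradiction. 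The arithmetic to set up is that the Gottesman pieces on $U_{k}$ use $2k+5$ generators for $2^{2k+3}$ qubits — i.e. they are the perfect $\left[\!\left[2^{2k+3},2^{2k+3}-2k-5,3\right]\!\right]_{2}$ codes — so the total generator count $2m+5$ plus the hybrid tail exactly accounts for the claimed dimension, and pushing the dimension up by one (to $n-2m-5$) would require a quantum code beating what \cite{Yu2013} permits at length $n$.

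The main obstacle I expect is pinning down the exact statement of the Yu-Chen-Oh bound and confirming it applies at these specific lengths: one must verify that $n$ is genuinely outside the finite list of exceptional lengths where their bound is met (the perfect-code lengths $2^{j}-1$), and that the slack in their inequality is enough to exclude $k=n-2m-5$ rather than merely $k=n-2m-4$. A secondary subtlety is making sure the comparison is with \emph{stabilizer} codes specifically (the proposition only claims nonexistence of a stabilizer code), so I would not need the full nonadditive linear-programming machinery here — the cleaner additive bound of \cite{Yu2013} suffices, and I would state explicitly which of their inequalities is being used so the reader can check the length condition.
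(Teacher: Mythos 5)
There is a genuine gap. Your overall plan (invoke the characterization of distance-3 stabilizer codes by Yu, Chen, and Oh \cite{Yu2013}) is the same as the paper's, but the way you propose to apply it rests on a mischaracterization of that theorem, and the resulting checks point in the wrong direction. The content of \cite[Theorem 1]{Yu2013} is not that the bound is ``strictly stronger than the quantum Hamming bound for most lengths $n$ not of the form $2^{j}-1$''; rather, for \emph{generic} lengths the rounded Hamming requirement $r\geq\left\lceil\log_{2}\!\left(3n+1\right)\right\rceil$ on the number $r$ of stabilizer generators is achievable, and only for the \emph{exceptional} lengths $n=\frac{8}{3}\left(4^{k}-1\right)+b$ with $b\in\left\{-1,1,2\right\}$ is one extra generator forced, i.e.\ $r\geq\left\lceil\log_{2}\!\left(3n+1\right)\right\rceil+1$. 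This distinction is exactly what the proof needs: for $a\in\left\{7,9,10\right\}$ one has $3n+1=2^{2m+5}+3a-31<2^{2m+5}$, so a putative $\left[\!\left[n,n-2m-5,3\right]\!\right]_{2}$ code with $r=2m+5$ generators \emph{does} satisfy the Hamming bound, and no ``Hamming-type'' argument of the kind you sketch can exclude it. What excludes it is that $n=\frac{8}{3}\left(4^{m+1}-1\right)+\left(a-8\right)$ with $a-8\in\left\{-1,1,2\right\}$ lies \emph{inside} the exceptional family, so $r\geq\left\lceil\log_{2}\!\left(3n+1\right)\right\rceil+1=2m+6>2m+5$. Your proposed verification that $n$ is ``genuinely outside the finite list of exceptional lengths'' is therefore backwards, and identifying the exceptional lengths with the perfect-code lengths $2^{j}-1$ is incorrect. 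The case $a=11$ also needs separate treatment, which your proposal does not provide: there $n$ is not of the exceptional form, but $3n+1=2^{2m+5}+2$ gives $\left\lceil\log_{2}\!\left(3n+1\right)\right\rceil=2m+6>2m+5$, so the generic case of \cite{Yu2013} already rules the code out.

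A secondary issue: the first part of your argument, recalling the hybrid family and asserting that its outer code is a pure $\left[\!\left[n,n-2m-5,2\right]\!\right]_{2}$ code that fails to have distance 3, plays no logical role in the nonexistence claim. The proposition is a statement about all stabilizer codes with the given parameters, not about the particular outer code of the construction, so nothing about that specific code (its purity, its degenerate errors, or the phases picked up by the classical-stabilizer elements) can contribute to the proof; the entire argument must come from the length-versus-generator-count dichotomy in \cite{Yu2013}, applied with the case split on $a$ described above.
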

\begin{proof}
When $a=7,9,10$, we have 
\begin{align*}
n & = \frac{2^{2m+5}-32}{3}+a \\
& = \frac{2^{2m+5}-8}{3}+\left(a-8\right) \\
& = \frac{8}{3}\left(4^{m+1}-1\right)+\left(a-8\right).
\end{align*}
By a result of Yu et al. \cite[Theorem 1]{Yu2013}, distance 3 stabilizer codes with lengths of the form $$\frac{8}{3}\left(4^{k}-1\right)+b,$$ where $b\in\left\{-1,1,2\right\}$, can exist if and only if $$2m+5\geq \left\lceil\log_{2}\!\left(3n+1\right)\right\rceil+1.$$ But in this case we have
\begin{align*}
\left\lceil\log_{2}\!\left(3n+1\right)\right\rceil+1 & = \left\lceil\log_{2}\!\left(2^{2m+5}+3a-31\right)\right\rceil+1\\
 & > \left\lceil\log_{2}\!\left(2^{2m+5}-2^{2m+4}\right)\right\rceil+1 \\
 & = 2m+5,
\end{align*}
so when $a=7,9,10$, there is no distance 3 stabilizer code of length $n$.

When $a=11$, a different case of \cite[Theorem 1]{Yu2013} applies, so distance 3 stabilizer codes with lengths of this form can exist if and only if $$2m+5\geq \left\lceil\log_{2}\!\left(3n+1\right)\right\rceil.$$ However, this gives us
\begin{align*}
\left\lceil\log_{2}\!\left(3n+1\right)\right\rceil & = \left\lceil\log_{2}\!\left(2^{2m+5}+2\right)\right\rceil\\
 & > \left\lceil\log_{2}\!\left(2^{2m+5}\right)\right\rceil \\
 & = 2m+5,
\end{align*}
so when $a=11$, there is likewise no distance 3 stabilizer code of length $n$.
\end{proof}

As with our family of error-detecting hybrid codes, it would be interesting to know whether any of these codes meet the linear programming bounds from Section \ref{lpb}. Since none of the hybrid codes we started with meet these bounds, it is doubtful that any of the hybrid codes constructed from stabilizer pasting would also meet this bound, leaving it unclear whether or not these codes are optimal among all hybrid codes.

\section{Conclusion and Discussion}
In this paper we have proven some general results about hybrid codes, showing that they can always detect more errors than comparable quantum codes. Furthermore we proved the necessity of impurity in the construction of genuine hybrid codes. Additionally, we generalized weight enumerators for hybrid stabilizer codes to nonadditive hybrid codes, allowing us to develop linear programming bounds for nonadditive hybrid codes. Finally, we have constructed several infinite families of hybrid stabilizer codes that provide an advantage over optimal stabilizer codes.

Both of our families of hybrid codes were inspired by the construction of nonadditive quantum codes. In hindsight this is not very surprising, as the examples of hybrid codes with small parameters given by Grassl et al. \cite{Grassl2017} were constructed using a CWS/union stabilizer construction. Most interesting is that all known good nonadditive codes with small parameters have a hybrid code with similar parameters. This would suggest that looking at larger nonadditive codes such as the quantum Goethals-Preparata code \cite{Grassl2008} or generalized concatenated quantum codes \cite{Grassl2009} might be helpful in constructing larger hybrid codes. Alternatively, it may be possible to use the existence of hybrid codes to point to where nonadditive codes may be found. For instance the existence of an $\left[\!\left[11,4\!:\!2,3\right]\!\right]_{2}$ hybrid code suggests a nonadditive code with similar parameters might exist.

As previously suggested by Grassl et al. \cite{Grassl2017}, one possible way to construct new hybrid codes with good parameters is to start with degenerate quantum codes with good parameters. Another possible approach to constructing new hybrid stabilizer codes is to find codes such that there are few small weight errors that are in the normalizer but not in the stabilizer, and then add those small weight errors to the generating set of the stabilizer to get a degenerate code. Here, the original code becomes the outer code of the hybrid code and the degenerate code the inner code.





\ifCLASSOPTIONcaptionsoff
  \newpage
\fi

\end{document}